\documentclass[11pt]{article}

\usepackage[margin=1in]{geometry}
\usepackage[T1]{fontenc}
\usepackage[utf8]{inputenc}
\usepackage{lmodern}
\usepackage{microtype}
\usepackage{amsmath,amssymb,amsthm,mathtools,bm}
\usepackage{booktabs}
\usepackage{array}
\usepackage{tabularx}
\usepackage{float}
\usepackage{enumitem}
\usepackage{siunitx}
\usepackage{xcolor}
\usepackage{graphicx}
\usepackage{pgfplots}
\usepackage{hyperref}
\usepackage[nameinlink,capitalize]{cleveref}

\graphicspath{{figures/}}
\pgfplotsset{compat=1.18}
\usepgfplotslibrary{groupplots}
\usepgfplotslibrary{colormaps}
\usetikzlibrary{arrows.meta,positioning}

\definecolor{euvAcol}{HTML}{1B4965}
\definecolor{euvBcol}{HTML}{BC4749}
\definecolor{euvCcol}{HTML}{2A9D8F}
\pgfplotsset{
  euvA/.style={color=euvAcol},
  euvB/.style={color=euvBcol},
  euvC/.style={color=euvCcol},
  euvaxis/.style={
    tick label style={font=\footnotesize},
    label style={font=\small},
    title style={font=\small},
    legend style={font=\footnotesize, draw=none, fill=none},
    grid=both,
    grid style={line width=.2pt, draw=gray!25},
    major grid style={line width=.3pt, draw=gray!35},
    axis line style={line width=.5pt},
    tick style={line width=.4pt},
  },
}

\hypersetup{colorlinks=true,linkcolor=blue!50!black,citecolor=blue!50!black,urlcolor=blue!55!black}
\setlist{leftmargin=*}
\newtheorem{theorem}{Theorem}
\newtheorem{proposition}{Proposition}

\newtheorem{corollary}{Corollary}
\newtheorem{definition}{Definition}
\newtheorem{experiment}{Experiment}
\newtheorem{assumption}{Assumption}
\newtheorem{remark}{Remark}

\DeclareMathOperator{\KL}{D_{\rm KL}}
\DeclareMathOperator{\TV}{TV}
\DeclareMathOperator{\Tr}{Tr}
\DeclareMathOperator{\Null}{Null}

\newcommand{\E}{\mathbb{E}}

\newcommand{\dd}{\mathrm{d}}
\newcommand{\eps}{\varepsilon}
\newcommand{\calA}{\mathcal A}
\newcommand{\calB}{\mathcal B}

\newcommand{\calI}{\mathcal I}
\newcommand{\calM}{\mathcal M}
\newcommand{\calN}{\mathcal N}
\newcommand{\calP}{\mathcal P}
\newcommand{\calU}{\mathcal U}
\newcommand{\calZ}{\mathcal Z}
\newcommand{\inner}[2]{\left\langle #1,#2\right\rangle}
\newcommand{\norm}[1]{\left\lVert #1\right\rVert}
\newcommand{\Adv}{\mathsf{Adv}}

\title{Quantifying Side-Channel Leakage in Public Metrology Releases\\[1.5ex]{\large\itshape Screened EUV Roughness Spectra as a Case Study}}
\author{
  Faruk Alpay\thanks{Correspondence: \texttt{alpay@lightcap.ai}}\\
  Department of Computer Engineering\\
  Bah\c{c}e\c{s}ehir University, Istanbul, Turkiye\\
  \texttt{faruk.alpay@bahcesehir.edu.tr}
  \and
  Taylan Alpay\\
  Department of Aerospace\\
  University of Turkish Aeronautical Association, Ankara, Turkiye\\
  \texttt{s220112602@stu.thk.edu.tr}
}
\date{May 31, 2026}

\begin{document}
\maketitle

\begin{abstract}
Public scientific and metrology releases can leak the hidden settings that produced them. We formalize and quantify this risk as a profiled statistical side-channel audit: a release map exposes finite-band statistics of a power spectral density (PSD), a profiled observer trains labeled template spectra under an explicit budget, and a challenge release is drawn from one of two utility-equivalent recipes separated by a protected coordinate. Averaged PSD bins follow a gamma channel, replaced by a covariance-weighted log-spectrum channel when bins are correlated, which yields exact KL divergences, Chernoff exponents, protected-bit advantage bounds, and finite-training, finite-library, finite-compute, and mismatch corrections. Our headline result is a finite-band transport-leakage law: after amplitude and blur are eliminated, the protected acid-transport information obeys \(\mathcal I_{\lambda\mid\alpha,\beta}(K)=\tfrac{64}{1225}w\lambda^6K^9+O(w\lambda^8K^{11})\) for \(K\lambda\ll1\), a ninth-order exponent with a closed-form safe band. A step-by-step protocol turns a measured release into these numbers, and a fixed-seed package reproduces every table and figure. We instantiate the audit on screened EUV roughness spectra as a model-conditioned case study, with deployment on measured releases the next step.
\end{abstract}

\section{Introduction}

Public scientific and metrology releases can leak the hidden settings that produced them: once a measurement is published, an observer may use it to recover process parameters the releaser never intended to disclose. This paper formalizes that risk as a profiled statistical side-channel audit. The hidden state is a recipe coordinate such as quencher loading, acid-transport length, stochastic amplitude, secondary-electron blur, or metrology floor; the public transcript is a released roughness statistic such as a full power-spectral-density (PSD) band, a redacted band, a fitted parameter vector, or an RMS-only scalar; and the audit asks how many released spectra are needed before a protected recipe bit becomes distinguishable within a public utility class. We adapt quantitative information-flow and profiled template-attack reasoning to public scientific releases, quantifying leakage as statistical-channel distinguishability under a declared release channel. The contribution is a reproducible statistical side-channel audit framework for scientific metrology releases, in which each adversary class is graded by a profiling budget: the labeled spectra it trains on, the templates it stores, and the likelihood scores it evaluates.

Our running case study is EUV lithography metrology, where roughness spectra are process measurements and release artifacts. RMS line-edge roughness compresses an edge into one scalar, while a PSD release preserves the low-frequency plateau, correlation length, high-frequency roll-off, and metrology floor. Model-function roughness analysis made this spectral representation standard for lithographic roughness characterization \cite{constantoudis2004}; the imec roughness protocol sharpened the requirements for reproducible PSD metrology \cite{lorusso2018}; and photoresist studies use PSD shape to compare formulation and process changes \cite{cutler2021}. The screened EUV spectrum below is a physically motivated instantiation of the audit, not a validated process model.

The criterion is statistical-channel distinguishability: it quantifies how distinguishable physical recipe bits are under the released PSD laws of a declared channel model. The adversary is operationally a profiled observer who can characterize the released channel on \emph{known} recipes (for instance a party that runs the same public metrology on reference wafers, or that has accumulated earlier releases whose process settings are known) and then attributes a challenge release by likelihood. Its baseline is the best likelihood test allowed by the released channel, refined by additional rows for finite training, finite template libraries, finite score evaluations, adaptive sublibrary search, and model mismatch.

The EUV core is the screened spectrum
\[
S(k)=A e^{-\beta k^2}(1+\lambda^2k^2)^{-3/2}+S_0,
\]
with \(\lambda^2=D_H/(k_qq_0+k_{\rm loss})\). The nuisance coordinates \(A\) and \(\beta\) absorb stochastic amplitude and Gaussian blur, and \(S_0\) is the metrology floor. Here \(\lambda\) is an \emph{effective} transport coordinate: a single released spectrum identifies only the combination \(\mu=\lambda^{-2}=(k_qq_0+k_{\rm loss})/D_H\), so attributing a shift to quencher loading \(q_0\) rather than to the loss rate requires a controlled sweep (\cref{sec:transport}); the leakage statements below concern \(\lambda\) (equivalently \(\mu\)), not an individually identified chemical rate. After amplitude and Gaussian blur are eliminated, the first acid-transport signature that survives in a low band is a fourth-order curvature residual whose squared finite-band norm scales as \(\lambda^6K^9\), producing a ninth-order leakage exponent below the transport knee.

\paragraph{Contributions.}
\begin{enumerate}[label=(\arabic*)]
\item A resource-explicit recipe-release game with training oracle, challenge transcript, protected bit, utility-preserving pairs, and budgets \((N,Q,L,T)\) that grade the adversary by profiling effort: training spectra, stored templates, and likelihood-score evaluations.
\item Exact gamma-channel likelihoods for averaged PSD release, and a covariance-weighted replacement for correlated log-PSD bins, with KL, Chernoff, protected-bit advantage, finite-training, finite-compute, and additive-accounting bounds.
\item A finite-band EUV transport theorem: \(\mathcal I_{\lambda\mid\alpha,\beta}(K)=(64/1225)w\lambda^6K^9+O(w\lambda^8K^{11})\).
\item A nuisance-coupled release analysis for nonzero floor, chemical non-identifiability, Schur-complement coupling, and optimal zero-leakage utility projection.
\item A step-by-step audit protocol (\cref{sec:protocol}) that turns a measured PSD release into the reported exponents and safe band, fixing the bin-variance estimator, the covariance estimate, the screened-model fit and its residual diagnostics, the nuisance profiling, and the mismatch and floor checks.
\item A numerical audit with a reconstructed-PSD fit (used only to exercise the pipeline, not measured data), a published 18-nm calibration, 10{,}000-trial simulations, PSD-vs-RMS ablation, finite-library thresholds, and floor-mismatch sensitivity.
\item An open, reproducible package (\cref{sec:validation}) that regenerates every table and figure with fixed seeds, numerically checks the ninth-order asymptotic law under the stated screened-PSD model, and adds finite-training, correlated-bin, coupling-map, and release-optimizer computations; the worked derivations are collected in \crefrange{app:green}{app:optimizers}.
\end{enumerate}

\paragraph{Roadmap.}
We proceed in five movements. \emph{First}, we build the release channel: we cast the screened EUV spectrum as a parametric family and turn each averaged PSD band into a gamma likelihood, from which we will read exact KL and Chernoff exponents (\cref{sec:channel}). \emph{Second}, we make the adversary explicit: we define the recipe-release game, fix the profiling budgets, and prove the optimal-test, protected-bit, and finite-resource bounds that the audit will report row by row (\cref{sec:game}). \emph{Third}, we isolate the physics: we eliminate amplitude and Gaussian blur as nuisance directions and show that the surviving acid-transport signal obeys the ninth-order law $\calI_{\lambda\mid\alpha,\beta}(K)=(64/1225)w\lambda^6K^9+O(K^{11})$, from which a closed-form safe band follows (\cref{sec:transport}). \emph{Fourth}, we turn the law into a design tool: we place protected tangents in the nullspace of the release map and solve for the public statistic that keeps the most utility under a leakage budget (\cref{sec:release}). \emph{Finally}, we give a step-by-step protocol for auditing a measured release (\cref{sec:protocol}) and let the code referee the theory: we reproduce the manuscript's calibrated and synthetic audits, then add transport-knee, finite-training, correlated-bin, coupling, and release-optimizer experiments that each stress a separate assumption (\crefrange{sec:audits}{sec:validation}), before concluding with the scope the guarantees do and do not cover. Every derivation invoked along the way is carried out in full in \crefrange{app:green}{app:optimizers}.

\paragraph{Reproducibility.} Every number below is produced by the ancillary package: \texttt{python scripts/reproduce\_all.py} regenerates the tables and figures, and \texttt{anc/reproduce.py} regenerates the core tables from inside the ancillary tree. The seed is \(20260531\) throughout. We label \emph{calibrated} (public 18-nm numbers), \emph{synthetic} (generated spectra), and \emph{reconstructed} (the digitized-PSD points, a stand-in for an unavailable original digitization) quantities explicitly. The layout is flat: \texttt{src/euv\_audit} holds the core channel, divergence, projection, and release-map code; \texttt{scripts/} holds the audit drivers, figure-data export, and the source packager; \texttt{tests/} holds unit and regression checks of the identities; \texttt{validation/} holds the Monte-Carlo evidence runs with frozen configs and per-run provenance; \texttt{paper/figures} and \texttt{paper/tables} hold the pgfplots \texttt{.tex} sources with their \texttt{.dat} data; and \texttt{anc/} is the standalone ancillary tree shipped with the source. The package targets Python\,\(\ge\)\,3.10 with NumPy, SciPy, and Matplotlib (pinned in \texttt{requirements.txt} and \texttt{environment.yml}), and a top-level \texttt{README} gives the one-command reproduce. A suite of \(51\) unit and regression checks across nine files pins the identities used here and runs in seconds: the gamma KL/Chernoff formulas, gamma moments, the Schur-complement transport information by three independent methods (QR, SVD, and direct), the ninth-order law on a small grid, covariance regularization, and fixed-seed determinism. The full Monte-Carlo validation completes in minutes.

\section{EUV spectral release channel}
\label{sec:channel}

We begin by fixing what is released. In this section we write down the recipe-resolved spectrum, identify which coordinates are protected and which are nuisances, and convert an averaged PSD band into a likelihood so that ``how much does a release leak'' becomes a divergence we can compute.

\subsection{Recipe-resolved spectrum}

A local recipe is \(\theta=(A,\beta,\lambda,S_0,\vartheta)\). The screened spectrum is
\begin{equation}
S_\theta(k)=A\exp(-\beta k^2)(1+\lambda^2 k^2)^{-3/2}+S_0,\qquad 0\leq k\leq K,
\label{eq:screened-psd}
\end{equation}
with transport length
\begin{equation}
\lambda^2=\frac{D_H}{k_q q_0+k_{\rm loss}}.
\label{eq:lambda-chem}
\end{equation}
The screened factor \((1+\lambda^2k^2)^{-3/2}\) is the one-dimensional edge spectrum of a two-dimensional screened reaction--diffusion field; the full reduction is carried out in \cref{app:green}. Coupled acid catalysis and diffusion were measured in chemically amplified resists \cite{houle2000}; acid-base reaction modeling motivates the screened loss term \cite{houle2004}; acid diffusion is a process-limiting coordinate \cite{vansteenwinckel2005}.

\subsection{Metrology-floor regimes}

With \(P(k)=A\exp(-\beta k^2)(1+\lambda^2k^2)^{-3/2}\), \(r(k)=P/(P+S_0)\), \(\alpha=\log A\), and \(f=\log S_\theta\), the floor-aware tangents are
\begin{equation}
\partial_\alpha f=r,\quad \partial_\beta f=-rk^2,\quad
\partial_\lambda f=-r\frac{3\lambda k^2}{1+\lambda^2k^2},\quad
\partial_{S_0}f=\frac{1}{P+S_0}.
\label{eq:floor-tangents}
\end{equation}

\begin{proposition}[Floor-coupled transport information]
\label{prop:floor-coupling}
Let \(\calN_r=\operatorname{span}\{\partial_\alpha f,\partial_\beta f\}\) and \(\calN_F=\operatorname{span}\{\partial_\alpha f,\partial_\beta f,\partial_{S_0}f\}\). The fixed-floor and profiled-floor conditional informations are
\[
\calI_{\lambda\mid\alpha,\beta}^{(r)}(K)=\inner{g_\lambda^F}{(I-\Pi_{\calN_r})g_\lambda^F}_{w,K},
\qquad
\calI_{\lambda\mid\alpha,\beta,S_0}(K)=\inner{g_\lambda^F}{(I-\Pi_{\calN_F})g_\lambda^F}_{w,K},
\]
with \(g_\lambda^F=-r\,3\lambda k^2/(1+\lambda^2k^2)\). If \(\rho_K=\sup_k S_0/P(k)\le\tau<1\), then \(|\calI_{\lambda\mid\alpha,\beta}^{(r)}-\calI_{\lambda\mid\alpha,\beta}|\le C_K\tau\norm{g_\lambda}_{w,K}^2\). Profiling the floor can only reduce information, with the exact one-direction loss obtained by projecting \(\partial_{S_0}f\) after \(\calN_r\). If \(r(k)\le r_*\) on a subband, that subband contributes at most \(r_*^2\) of the corresponding floor-free raw Fisher entry.
\end{proposition}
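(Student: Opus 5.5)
The plan is to read everything off the standard Gaussian/gamma-channel Fisher calculus on the log-spectrum. For averaged PSD bins with shape parameter \(w\) (or the weight in \(\inner{\cdot}{\cdot}_{w,K}\)), the Fisher information in the parameters of \(f=\log S_\theta\) is the Gram matrix \(G_{ij}=\inner{\partial_i f}{\partial_j f}_{w,K}\). So the conditional (nuisance-profiled) information for \(\lambda\) is the Schur complement \(G_{\lambda\lambda}-G_{\lambda\calN}G_{\calN\calN}^{-1}G_{\calN\lambda}\). This Schur complement equals \(\inner{g_\lambda^F}{(I-\Pi_\calN)g_\lambda^F}_{w,K}\), where \(\Pi_\calN\) is the \(w\)-orthogonal projection onto the span of the nuisance tangents.

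Taking the nuisance set \(\{\alpha,\beta\}\) with the floor fixed gives the first formula with \(\calN_r\). Adding \(S_0\) as a profiled nuisance gives the second with \(\calN_F\). The tangents are exactly \eqref{eq:floor-tangents}, and \(g_\lambda^F=\partial_\lambda f\).

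Monotonicity comes from \(\calN_r\subset\calN_F\). For nested subspaces, \(I-\Pi_{\calN_F}=(I-\Pi_{\calN_r})-\Pi_{u}\), with \(u=(I-\Pi_{\calN_r})\partial_{S_0}f\) (Gram--Schmidt). This gives the exact one-direction loss
\[
\calI^{(r)}_{\lambda\mid\alpha,\beta}-\calI_{\lambda\mid\alpha,\beta,S_0}=\frac{\inner{g_\lambda^F}{u}_{w,K}^2}{\norm{u}_{w,K}^2}\ge0,
\]
which is nonnegative. If \(u=0\), the loss is zero.

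The subband statement follows because every tangent carries the factor \(r\). Since \(g_\lambda^F=r\,g_\lambda\), with \(g_\lambda=-3\lambda k^2/(1+\lambda^2k^2)\) the floor-free tangent, the raw Fisher entry restricted to a subband \(B\) is \(\int_B w\,r^2g_\lambda^2\le r_*^2\int_B w\,g_\lambda^2\). The same bound holds for each nuisance entry.

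The perturbation bound is the main obstacle. Write \(r=1-\delta\) with \(\delta=S_0/(P+S_0)\le\rho_K\le\tau\). Then each floor-aware tangent differs from its floor-free counterpart by a relative factor bounded by \(\tau\): \(\norm{r h-h}_{w,K}\le\tau\norm{h}_{w,K}\) for \(h\in\{1,-k^2,g_\lambda\}\).

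The first step is to bound the change in the numerator. The map \(v\mapsto\inner{v}{(I-\Pi_{\calN})v}\) has its difference controlled by \(2\norm{v-v'}\,\norm{v}\) plus the change in the projector.

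The second step is to control the projector change. Write \(\Pi_\calN=M(M^\top WM)^{-1}M^\top W\) and perturb the \(2\times2\) Gram matrix. Its inverse changes by at most \(O(\tau)\,\norm{G_{\calN\calN}^{-1}}\) times the tangent norms, provided \(\tau<1\) keeps the perturbed Gram invertible. Here I would first try a crude operator-norm bound, in which \(C_K\) absorbs the condition number of the floor-free \(\{1,k^2\}\) Gram matrix on \([0,K]\) and the ratio of the nuisance-tangent norms to \(\norm{g_\lambda}\). This is acceptable because the statement only claims a \(K\)-dependent constant.

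To get the final form \(C_K\tau\norm{g_\lambda}^2_{w,K}\), I would factor out \(\norm{g_\lambda}^2\) by bounding \(\norm{\Pi_\calN g_\lambda}\le\norm{g_\lambda}\). Here \(\calI_{\lambda\mid\alpha,\beta}\) is the floor-free Schur complement defined with \(g_\lambda\) and \(\calN=\operatorname{span}\{1,-k^2\}\).

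One could avoid the Gram perturbation entirely by the observation that multiplication by \(r\) is an operator \(R\) with \(\norm{R-I}\le\tau\). Then \(\calN_r=R\calN\) and \(g^F_\lambda=Rg_\lambda\), and the residual is a Schur complement under the reweighted inner product \(w r^2\). Comparing the weights \(w\) and \(wr^2\), which lie within the factor \([(1-\tau)^2,1]\), bounds the ratio of the two conditional informations directly. This would give \(C_K\) essentially as \(2\), or as \(2/(1-\tau)^2\) if a two-sided bound is needed. I would try this weight-comparison route first and fall back on the explicit perturbation only if the min-over-nuisance characterization of the Schur complement does not yield a two-sided bound cleanly.
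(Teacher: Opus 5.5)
Your proposal is correct. The Schur-complement formulas, the nested-subspace loss \(\inner{g_\lambda^F}{u}_{w,K}^2/\norm{u}_{w,K}^2\) (equal to the paper's expression since \(u\perp\calN_r\)), and the subband \(r_*^2\) bound all follow the paper's proof.

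The real difference is the perturbation bound, if you take your preferred route. The paper, like your fallback, treats \(\Pi_{\calN_r}\) as a Lipschitz perturbation of \(\Pi_{\alpha,\beta}\), with \(\norm{\Pi_{\calN_r}-\Pi_{\alpha,\beta}}\le c_K\tau\) controlled by the Gram conditioning of \(\{1,k^2\}\) on the band. It then expands the difference of residual norms to get \(C_K=2(1+c_K)+O(\tau)\).

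Your weight-comparison route instead uses the exact structure \(\calN_r=M_r\calN\) and \(g_\lambda^F=M_rg_\lambda\), where \(M_r\) (your \(R\)) is multiplication by \(r\). This gives \(\calI^{(r)}_{\lambda\mid\alpha,\beta}=\min_{h\in\calN}\int_0^K w\,r^2(g_\lambda-h)^2\,\dd k\), the same minimization as \(\calI_{\lambda\mid\alpha,\beta}\) but with weight \(wr^2\) in place of \(w\). Since \((1+\tau)^{-2}\le r^2\le 1\) pointwise, the two objectives compare for every fixed \(h\), and the comparison survives taking the infimum. So the min-characterization does yield a clean two-sided bound, and no fallback is needed: \((1+\tau)^{-2}\calI_{\lambda\mid\alpha,\beta}\le\calI^{(r)}_{\lambda\mid\alpha,\beta}\le\calI_{\lambda\mid\alpha,\beta}\). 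Hence \(0\le\calI_{\lambda\mid\alpha,\beta}-\calI^{(r)}_{\lambda\mid\alpha,\beta}\le 2\tau\,\calI_{\lambda\mid\alpha,\beta}\le 2\tau\norm{g_\lambda}_{w,K}^2\).

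This is stronger than what the paper proves. The constant is exactly \(2\), so your \(2/(1-\tau)^2\) hedge is unnecessary. It does not depend on \(K\), on \(w\), or on any conditioning quantity. It holds for all \(\tau\ge 0\) rather than to first order. It also carries a sign: a fixed nonzero floor can only lower the amplitude/blur-profiled information.

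What the paper's projection-perturbation template buys in exchange is generality. It does not need the protected and nuisance tangents to share a common multiplicative factor, so it would still apply when the nuisance span and the protected tangent are distorted differently. This statement does not require that.
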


\begin{proof}
Work in the Hilbert space \(H=L^2_w[0,K]\). For any nuisance span \(\calN\), the Fisher Schur complement of \(g_\lambda^F\) after eliminating \(\calN\) is
\[
\inf_{h\in\calN}\|g_\lambda^F-h\|_{w,K}^2
=\|(I-\Pi_\calN)g_\lambda^F\|_{w,K}^2,
\]
because the normal equations for the least-squares projection are exactly the Schur-complement normal equations. Differentiating \(f=\log(P+S_0)\) gives the four tangents in \eqref{eq:floor-tangents}, hence the two displayed residual-norm formulas.

For the perturbation estimate, write \(M_r\) for multiplication by \(r=(1+\rho)^{-1}\). Since \(\rho_K\le\tau\), \(\|M_r-I\|_{H\to H}\le \tau\). The floor-fixed nuisance space is \(M_r\operatorname{span}\{1,-k^2\}\), while \(g_\lambda^F=M_rg_\lambda\). The two-dimensional Gram matrix of \(\{1,-k^2\}\) on \([0,K]\) is nonsingular, so the orthogonal projection onto this finite-dimensional span is locally Lipschitz under the bounded perturbation \(M_r=I+O(\tau)\). Thus \(\|\Pi_{\calN_r}-\Pi_{\alpha,\beta}\|\le c_K\tau\) for a constant depending only on the Gram conditioning over the released band. Expanding the difference of the two residual norms,
\[
\|(I-\Pi_{\calN_r})M_rg_\lambda\|^2-\|(I-\Pi_{\alpha,\beta})g_\lambda\|^2,
\]
and using \(\|M_r-I\|\le\tau\) and \(\|\Pi_{\calN_r}-\Pi_{\alpha,\beta}\|\le c_K\tau\) gives the stated bound with \(C_K=2(1+c_K)+O(\tau)\).

Finally, \(\calN_r\subseteq\calN_F\), so projection onto \(\calN_F\) cannot increase the residual norm. More explicitly, if \(h=(I-\Pi_{\calN_r})\partial_{S_0}f\), then
\[
\calI_{\lambda\mid\alpha,\beta}^{(r)}-\calI_{\lambda\mid\alpha,\beta,S_0}
=
\begin{cases}
|\inner{(I-\Pi_{\calN_r})g_\lambda^F}{h}_{w,K}|^2/\|h\|_{w,K}^2, & h\ne0,\\
0, & h=0.
\end{cases}
\]
On any subband where \(r\le r_*\), every non-floor recipe tangent is multiplied by \(r\), so every raw Fisher integrand is multiplied by \(r^2\le r_*^2\).
\end{proof}

\subsection{PSD estimator assumptions}

\begin{assumption}[PSD channel conditions]
\label{ass:psd}
All divergence, exponent, and finite-resource results below are stated under the following standing conditions. \textbf{(A1)} Bins are independent with \(\widehat S_i\mid\theta\sim{\rm Gamma}(m_i,S_\theta(k_i)/m_i)\) (mean \(S_\theta(k_i)\), variance \(S_\theta(k_i)^2/m_i\)). \textbf{(A2)} \(S_\theta>0\) and \(\theta\mapsto\log S_\theta\) is twice continuously differentiable on the parameter set. \textbf{(A3)} After quotienting the declared nuisance directions, the released-tangent Fisher form is nonsingular (the protected coordinate is identifiable). \textbf{(A4)} Template estimators are \(\sqrt{Q_j}\)-consistent and asymptotically efficient. \textbf{(A5)} The per-release model mismatch \(\KL(P_\theta\Vert Q_\theta)\) is uniformly bounded. Each condition is empirically checkable on a measured release, as detailed next and assembled into a procedure in \cref{sec:protocol}.
\end{assumption}

A binwise estimate of the effective shape is \(\widehat m_i=(\bar S_i/s_i)^2\) from repeated spectra, and Welch averaging is the classical setting \cite{welch1967}. These conditions are checkable on a measured release rather than merely assumed: the gamma shape A1 by a per-bin goodness-of-fit test on the repeats, the effective degrees of freedom by \(2\widehat m_i\), the \(C^2\) regularity A2 by inspecting log-PSD residuals for unmodeled structure, and the nonsingularity A3 from the conditioning of the profiled nuisance block. \cref{sec:protocol} collects these diagnostics into an explicit procedure.

The gamma shape A1 is the standard model for a Welch-averaged periodogram: averaging \(K_{\rm seg}\) \emph{non-overlapping} segments makes each bin approximately \(\mathrm{Gamma}(m_i,S_\theta(k_i)/m_i)\) with \(m_i\approx K_{\rm seg}\) and \(2m_i\) degrees of freedom, and the bins are then approximately independent. Overlapped segments, the usual variance-reduction choice, instead correlate neighbouring bins and lower the effective \(m_i\) below \(K_{\rm seg}\) by the standard overlap factor. We therefore estimate \(m_i\) empirically from repeats via \(\widehat m_i=(\bar S_i/s_i)^2\) rather than equating it with the nominal segment count, and replace the diagonal channel by the covariance-weighted one of \eqref{eq:cov-kl} when overlap or short records make the bins dependent. The correlated-bin ablation of \cref{sec:validation} quantifies how much that dependence changes the audit.

\subsection{Averaged PSD likelihood}

With \(S_i(\theta)=S_\theta(k_i)\) and \(r_i=S_i(\theta')/S_i(\theta)\), the joint gamma law gives the closed-form divergences (derived in \cref{app:gamma})
\begin{equation}
\KL(P_\theta\Vert P_{\theta'})=\sum_i m_i\!\left[\log r_i+\tfrac{1}{r_i}-1\right],
\label{eq:gamma-kl}
\end{equation}
\begin{equation}
C(\theta,\theta')=\sup_{0\le s\le1}\sum_i m_i\!\left[(1-s)\log S_i(\theta)+s\log S_i(\theta')+\log\!\Big(\tfrac{1-s}{S_i(\theta)}+\tfrac{s}{S_i(\theta')}\Big)\right].
\label{eq:chernoff}
\end{equation}
For correlated bins, with \(Z_i=\log\widehat S_i\), \(\mu_i=\log S_\theta(k_i)\), and \(\widehat\Sigma\) estimated from repeats,
\begin{equation}
\KL(P_\theta^R\Vert P_{\theta'}^R)\approx\tfrac12(\mu(\theta)-\mu(\theta'))^\top\widehat\Sigma^{-1}(\mu(\theta)-\mu(\theta')),
\label{eq:cov-kl}
\end{equation}
with Chernoff equal to one quarter of this for equal covariance; all formulas hold with \(\operatorname{diag}(m_i)\) replaced by \(\widehat\Sigma^{-1}\). Estimating the full \(n\times n\) covariance carries its own sample cost: the sample covariance from \(r\) repeats is singular for \(r\le n\) and is reliably invertible only for \(r\gg n\), so in the common regime \(r\lesssim n\) we shrink \(\widehat\Sigma\) toward a diagonal target \cite{ledoit2004} and accept a controlled bias, whereas the diagonal-gamma channel needs only the \(n\) per-bin variances and stays usable at small \(r\).

\section{Recipe-release game and adversary classes}
\label{sec:game}

With the channel in hand, we now name the opponent (\cref{fig:threat-model}). We state the release game, fix the budgets that bound a profiled adversary, and prove the four guarantees the audit will quote: the optimal pairwise test, the finite-library threshold, the protected-bit advantage from released KL, and the finite-resource corrections. These classes form a statistical audit taxonomy graded by profiling effort.

\begin{figure}[htbp]
\centering
\begin{tikzpicture}[
  >=Latex, font=\footnotesize,
  box/.style={draw=euvAcol, line width=.6pt, rounded corners=2pt, align=center,
              inner sep=3.5pt, fill=euvAcol!6, text width=18mm, minimum height=9mm},
  sec/.style={draw=euvBcol, line width=.7pt, rounded corners=2pt, align=center,
              inner sep=3.5pt, fill=euvBcol!8, text width=18mm, minimum height=9mm},
  flow/.style={-{Latex[length=2mm]}, line width=.6pt, draw=euvAcol!80},
  node distance=7mm and 6mm,
]
\node[sec] (owner) {Release owner: hidden $\theta_B$, protected $\lambda$};
\node[box, right=of owner] (map) {Release map $R$};
\node[box, right=of map] (pub) {Public PSD $Y\sim P_{\theta_B}^{R}$};
\node[box, right=of pub] (obs) {Profiled observer};
\node[sec, right=of obs] (bit) {Attribution $\widehat B$};
\node[box, below=of obs, text width=34mm] (lib)
  {Template library $\{\theta_j\}$ via training oracle ($Q$ labeled releases each)};
\draw[flow] (owner) -- (map);
\draw[flow] (map) -- (pub);
\draw[flow] (pub) -- (obs);
\draw[flow] (obs) -- (bit);
\draw[flow] (lib) -- (obs);
\end{tikzpicture}
\caption{\textbf{Threat model for a metrology-release audit.} A release owner holds a hidden recipe \(\theta_B\) whose protected coordinate is the effective acid-transport length \(\lambda\); the release map \(R\) exposes a finite-band PSD transcript \(Y\sim P_{\theta_B}^{R}\). A profiled observer, trained by a labeled-template oracle (\(Q\) releases per recipe), attributes the protected bit \(\widehat B\) by likelihood. The profiling budget \((N,Q,L,T)\) (library size, training depth, challenge length, and score evaluations) bounds the observer's effort, and the audit reports how distinguishable the protected pair is under that budget.}
\label{fig:threat-model}
\end{figure}

Concretely, the release owner is a fab or resist supplier that publishes roughness PSDs for benchmarking, metrology quality control, or a data-sharing consortium; the profiled observer is a competitor or analyst who can measure reference wafers of \emph{known} recipes (or holds previously disclosed recipe--PSD pairs) and uses them as the template library; and the protected bit is a recipe coordinate the owner would not disclose, such as quencher loading or acid-transport length. The audit asks whether the published statistic lets such an observer recover that bit, and at what released-band and sample-size cost.

\begin{experiment}[Statistical side-channel recipe-release distinguishability]
\label{exp:release}
Given \((\Theta,R,u,b)\): \textbf{(1)} the adversary receives budgets \((N,Q,L,T)\); \textbf{(2)} trains on \(Q_j\) releases per selected template; \textbf{(3)} a utility-equivalent protected pair \((\theta_0,\theta_1)\) is chosen, a uniform bit \(B\) sampled, and \(L\) releases from \(P_{\theta_B}^R\) given; \textbf{(4)} outputs \(\widehat B\) with advantage \(\Adv_R=|\Pr[\widehat B=B]-\tfrac12|\).
\end{experiment}

A protected pair is \emph{utility-equivalent} when both recipes meet the same public specification (identical target line-edge-roughness and critical-dimension budgets, so a downstream user cannot prefer one on utility grounds) yet differ in the protected coordinate \(\lambda\); the audit asks whether the release nonetheless separates them. The policy objects \(u\) and \(b\) encode this governance choice: \(u\) is the utility every release must preserve and \(b\) is the bit the owner intends to keep private.

\begin{definition}[Configuration and classes]
\label{def:classes}
The configuration is \(\kappa=(n,K,\{k_i,m_i\},L,N,Q,T,\delta_{\rm alg},R,u,b,\Theta_0)\). A score query returns \(\widetilde\ell_j=\sum_\ell\log q_{\widehat\theta_j}^R(Y_\ell)+e_j\), \(|e_j|\le\delta_{\rm alg}\). A finite-compute rule scores a data-dependent set \(\calM(Y)\subseteq\Theta_0\) with \(|\calM(Y)|\le M\le\lfloor T/(cnL)\rfloor\), since scoring one template over \(L\) releases and \(n\) bins costs \(cnL\) elementary operations; its coverage error is \(\pi_{\rm miss}=\sup_{\theta\in\Theta_0}\Pr_\theta[\theta\notin\calM(Y)]\), the worst-case probability that the true template is never scored.
\end{definition}

The analysis supports five concentric adversary classes, in increasing realism. The \emph{ideal-template} adversary knows every released law \(P_{\theta_j}^R\) exactly and runs full-library maximum likelihood and pairwise likelihood-ratio testing; it is the information-theoretic baseline. The \emph{finite-training} adversary instead estimates each template from \(Q_j\) labeled releases and scores with the plug-in likelihood, paying the score perturbation of \cref{prop:bounds}. The \emph{finite-library} adversary stores a codebook of \(N\) recipes and attributes against it under the union-bound threshold built from \(C_{\min}\). The \emph{finite-compute} adversary may evaluate at most \(T\) likelihood scores, so it searches an adaptive sublibrary of size \(M\le\lfloor T/(cnL)\rfloor\) and pays an additional coverage term. Finally the \emph{mismatched-model} adversary scores under a fitted law \(Q_\theta^R\) while data are drawn from \(P_\theta^R\), and is controlled by a positive-margin, sub-exponential concentration bound. Each class is a strict relaxation of the one before it, so an upper bound proved for the ideal-template adversary specializes to all of them and the audit reports, row by row, where a given release sits against each. Throughout, these budgets bound the adversary's statistical profiling effort: how many labeled spectra it trains on, how many templates it stores, and how many likelihood scores it evaluates.

\begin{theorem}[Optimal pairwise test and finite-library threshold]
\label{thm:binary}
For ideal templates the optimal advantage after \(L\) releases is \(\tfrac12\TV((P_{\theta_0}^R)^{\otimes L},(P_{\theta_1}^R)^{\otimes L})\), with Bayes error \(\exp\{-LC_R+o(L)\}\). For a library of \(N\), exhaustive ML attribution obeys \(\Pr[\widehat\theta\ne\theta]\le(N-1)\exp(-LC_{\min})\), \(C_{\min}=\min_{a\ne b}C_R(\theta_a,\theta_b)\); a finite-compute search scoring \(M\) templates obeys the bound with \(M-1\), plus its coverage error.
\end{theorem}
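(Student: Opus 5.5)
The plan is to prove the three claims in order, each from a standard hypothesis-testing fact specialized to the gamma release channel of \cref{ass:psd}.

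\textbf{Binary test.} For the pairwise test, with \(B\) uniform, write the success probability of any rule \(\widehat B\) as \(\tfrac12\Pr_{\theta_0}[\widehat B=0]+\tfrac12\Pr_{\theta_1}[\widehat B=1]\). Setting \(E=\{\widehat B=0\}\), this equals \(\tfrac12+\tfrac12(P_0^{\otimes L}(E)-P_1^{\otimes L}(E))\). Maximizing over events \(E\) gives \(\tfrac12+\tfrac12\TV\), and the maximum is attained by the likelihood-ratio set \(E=\{p_0^{\otimes L}\ge p_1^{\otimes L}\}\). This is the Neyman--Pearson/Le Cam identity. The absolute value in \(\Adv_R\) is handled by flipping the output, so the optimal advantage is \(\tfrac12\TV\).

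\textbf{Bayes exponent.} The Bayes error is \(\tfrac12\int\min(p_0^{\otimes L},p_1^{\otimes L})\). I will sandwich it between two bounds.
\begin{itemize}
\item \emph{Upper bound.} Use \(\min(a,b)\le a^{1-s}b^s\). Because the \(L\) releases are i.i.d., the integral factorizes into \(\exp\{-L\,[-\log\int p_0^{1-s}p_1^s]\}\). Optimizing over \(s\in[0,1]\) gives \(\exp(-LC_R)\).
\item \emph{Lower bound.} Apply Cramér's theorem under the tilted measure \(p_s\propto p_0^{1-s}p_1^s\) at the optimal \(s^*\), where the tilted mean of the log-likelihood ratio vanishes. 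This is the classical Chernoff argument: restrict to the event that the empirical log-likelihood ratio lies in \([-\eps L,\eps L]\), then let \(\eps\to0\).
\end{itemize}
The needed moment-generating-function finiteness and nondegeneracy hold under (A1)--(A3). For gamma bins the log-likelihood ratio is affine in \(\widehat S_i\) and \(\log\widehat S_i\), so its mgf is finite on an open interval containing \([0,1]\). The integral \(-\log\int p_0^{1-s}p_1^s\) evaluates per bin to the bracket in \eqref{eq:chernoff}, which identifies \(C_R\) with that formula. I expect the lower-bound half of this step (the \(o(L)\) term) to be the main technical obstacle. I would cite Chernoff's theorem rather than redo the tilting. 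I would also check that \(C_R>0\) whenever \(\theta_0,\theta_1\) give distinct released laws, which is ensured by (A3) and strict concavity in \(s\).

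\textbf{Finite library.} Condition on the true template \(\theta_a\). ML attribution errs only if some \(b\ne a\) has likelihood at least that of \(a\). By the union bound,
\[
\Pr[\widehat\theta\ne\theta_a]\le\sum_{b\ne a}P_a^{\otimes L}\big(p_b^{\otimes L}\ge p_a^{\otimes L}\big).
\]
Each term is at most \(\int (p_a^{\otimes L})^{1-s}(p_b^{\otimes L})^{s}\) for every \(s\in[0,1]\), by Markov's inequality applied to \((p_b/p_a)^s\). Taking the optimal \(s\) bounds each term by \(\exp(-LC_R(\theta_a,\theta_b))\le\exp(-LC_{\min})\). Summing over the \(N-1\) competitors gives the stated bound. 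Note that this per-pair bound has no \(o(L)\) slack, which is what makes the threshold explicit.

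\textbf{Finite compute.} Split the error event according to whether \(\theta\in\calM(Y)\).
\begin{itemize}
\item If \(\theta\notin\calM(Y)\), the error probability is at most \(\pi_{\rm miss}\) by \cref{def:classes}.
\item If \(\theta\in\calM(Y)\), an error requires a scored competitor to beat \(\theta\). Since \(\calM(Y)\) is data-dependent, I bound this event by the union over at most \(M-1\) competitors. Each term is \(P_a(\{p_b\ge p_a\}\cap\{b\in\calM(Y)\})\le P_a(p_b\ge p_a)\).
\end{itemize}
The adaptivity is harmless only if the union ranges over \(M-1\) fixed templates. I would therefore either assume the selection rule is measurable with a uniform size bound and use the same Chernoff bound per pair, or fall back to the whole library and state the \(M-1\) count for non-adaptive selection. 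This subtlety is a secondary concern. Adding the two cases yields \((M-1)e^{-LC_{\min}}+\pi_{\rm miss}\).
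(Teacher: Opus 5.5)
Your proof follows the paper's proof in every part. Both use the Neyman--Pearson/Le Cam identity for the optimal advantage and Chernoff's theorem (which the paper also just cites) for the Bayes exponent. For the library bound, both use the tilted Markov bound $\Pr_a\{\ell_b\ge\ell_a\}\le(\int p_a^{1-s}p_b^{s})^L$ plus a union bound over the $N-1$ competitors. For the finite-compute search, both split on the coverage event.

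The one place you depart from the paper is the adaptivity caveat. Your worry is justified: the paper simply asserts that on the covered event ``the same argument applies \dots\ giving $M-1$ competitors''. However, only your second remedy works.

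Your first remedy does not reduce the count. Measurability of $\calM(Y)$ and $|\calM(Y)|\le M$ are already granted by \cref{def:classes}. The bound $P_a(\{\ell_b\ge\ell_a\}\cap\{b\in\calM(Y)\})\le P_a(\ell_b\ge\ell_a)$ still has to be summed over all $N-1$ library members. This is because membership in $\calM(Y)$ may be positively correlated with $\{\ell_b\ge\ell_a\}$.

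The $M-1$ count genuinely fails for adaptive rules. Suppose $\calM(Y)$ is the set of the $M$ most likely templates (located, say, from a sufficient statistic). Then the restricted search commits exactly the full-ML error, while $\pi_{\rm miss}$ only charges the event that at least $M$ competitors beat the truth.

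Concretely, take $M=2$ and $N$ templates at equal pairwise Chernoff distance, for instance a regular simplex of means in a Gaussian location model. Pairwise joint wins have a strictly larger exponent there, so the uncharged excess is $\Pr_a[\text{exactly one competitor wins}]\approx(N-1)\Pr_a\{\ell_b\ge\ell_a\}$. The exact pairwise tail is only a factor of order $L^{-1/2}$ below $e^{-LC_{\min}}$. Hence this excess exceeds $(M-1)e^{-LC_{\min}}$ once $N$ is larger than a constant times $\sqrt L$. Since the argument is channel-agnostic, this one example shows the step cannot be repaired by measurability alone.

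Your final line $(M-1)e^{-LC_{\min}}+\pi_{\rm miss}$ should therefore be stated for a scored set chosen independently of the challenge releases. One example is a set drawn from a separate data split, so that you can condition on $\calM$ and apply the fixed-pair bound. For general data-dependent $\calM(Y)$, use $N-1$ in place of $M-1$. The paper's one-line justification has the same hole.
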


\begin{proof}
For two simple hypotheses with equal prior probabilities, the Neyman--Pearson likelihood-ratio rule is Bayes optimal. If \(P_L=(P_{\theta_0}^R)^{\otimes L}\) and \(Q_L=(P_{\theta_1}^R)^{\otimes L}\), its success probability is \(\tfrac12(1+\TV(P_L,Q_L))\), so its advantage is \(\tfrac12\TV(P_L,Q_L)\). The Chernoff theorem for independent repetitions gives
\[
-\frac1L\log P_{e,L}\to C_R(\theta_0,\theta_1),
\]
hence \(P_{e,L}=\exp\{-LC_R+o(L)\}\).

For the library bound, condition on the true template being \(a\). Exhaustive ML makes an error only if some \(b\ne a\) has likelihood at least that of \(a\). For any fixed competitor and any \(s\in[0,1]\),
\[
\Pr_a\{\ell_b(Y^L)\ge\ell_a(Y^L)\}
\le
\E_a\exp\{s(\ell_b-\ell_a)\}
=
\left(\int (p_a^R)^{1-s}(p_b^R)^s\right)^L .
\]
Optimizing over \(s\) gives \(\Pr_a\{\ell_b\ge\ell_a\}\le \exp\{-LC_R(\theta_a,\theta_b)\}\). A union bound over the \(N-1\) incorrect templates yields \((N-1)\exp(-LC_{\min})\). If a finite-compute procedure scores only a set \(\calM(Y)\) with \(|\calM|\le M\), the same argument applies on the event that the true template is covered, giving \(M-1\) competitors; adding the coverage failure probability gives the final claim.
\end{proof}

\begin{theorem}[Protected-bit advantage from released KL]
\label{thm:protected-bit-adv}
If \(\KL(P_{\theta_0}^R\Vert P_{\theta_1}^R)\le\eps\) and \(\KL(P_{\theta_1}^R\Vert P_{\theta_0}^R)\le\eps\), then every adversary in \(\calA_\kappa\) has \(\Adv_R\le\tfrac12\sqrt{L\eps/2}\).
\end{theorem}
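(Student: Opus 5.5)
The plan is to reduce every adversary in \(\calA_\kappa\) to the ideal-template adversary, and then bound the latter with Pinsker's inequality and the tensorization of KL. The first step is to check that the bound is adversary-independent. Whatever the adversary does during profiling, it happens before the challenge and independently of the bit \(B\). This covers finite training on \(Q_j\) labeled releases, a finite library of size \(N\), at most \(T\) score evaluations with algorithmic error \(\delta_{\rm alg}\), and scoring under a mismatched law \(Q^R_\theta\). Conditional on the profiling transcript, call it \(\mathcal T\), the output \(\widehat B\) is a (possibly randomized) function of \(\mathcal T\) and the challenge \(Y^L\), where \(Y^L\sim(P_{\theta_B}^R)^{\otimes L}\).

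By \cref{thm:binary}, for each fixed \(\mathcal T\) the success probability is at most that of the Bayes-optimal likelihood-ratio test. Its advantage is therefore at most \(\tfrac12\TV\bigl((P_{\theta_0}^R)^{\otimes L},(P_{\theta_1}^R)^{\otimes L}\bigr)\). Averaging over \(\mathcal T\) preserves this bound, because \(\mathcal T\) is independent of \(B\) and of \(Y^L\) given \(B\). The absolute value in \(\Adv_R\) is handled by flipping the output, which keeps the test inside the optimal-test comparison. This is just the data-processing inequality for total variation applied to the Markov kernel \((\mathcal T,Y^L)\mapsto\widehat B\).

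The second step bounds the \(L\)-fold total variation. KL tensorizes over independent releases:
\[
\KL\bigl((P_{\theta_0}^R)^{\otimes L}\Vert(P_{\theta_1}^R)^{\otimes L}\bigr)=L\,\KL(P_{\theta_0}^R\Vert P_{\theta_1}^R)\le L\eps .
\]
Pinsker's inequality then gives \(\TV\le\sqrt{L\eps/2}\), so \(\Adv_R\le\tfrac12\sqrt{L\eps/2}\). Only one direction of the KL hypothesis is needed. The symmetric assumption lets us apply Pinsker in whichever direction is convenient, and a remark can note that the bound also holds with \(\min\) of the two divergences. If the advantage is ever compared against \(\tfrac12\), the trivial cap \(\TV\le1\) applies.

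The main obstacle I expect is making the first step rigorous for the mismatched-model and finite-compute classes. These adversaries may use extra internal randomness or an adaptive sublibrary \(\calM(Y)\), and one must verify that nothing in their profiling depends on \(B\). The fix is to formalize any adversary as a randomized decision rule \(\delta(\widehat B\mid Y^L,\mathcal T,U)\), with \(U\) independent internal randomness. Then \(|\Pr[\widehat B=B]-\tfrac12|=\tfrac12|\E_{P_L}\delta-\E_{Q_L}\delta|\le\tfrac12\TV(P_L,Q_L)\) follows directly from the variational definition of total variation, without needing the class hierarchy at all.
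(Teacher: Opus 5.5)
Your proposal is correct and follows the paper's proof: tensorize KL over the $L$ independent releases, apply Pinsker to get $\TV\le\sqrt{L\eps/2}$, and bound any (possibly randomized) decision rule's advantage by half the total variation. Conditioning explicitly on the $B$-independent profiling transcript and internal randomness makes rigorous the paper's one-line claim that every adversary's rule is a measurable set or a mixture of such sets. You are also right that only one direction of the KL hypothesis is needed.
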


\begin{proof}
Let \(P_i=P_{\theta_i}^R\). Independence gives
\[
\KL(P_0^{\otimes L}\Vert P_1^{\otimes L})=L\KL(P_0\Vert P_1)\le L\eps
\]
and the same bound in the reverse direction. Pinsker's inequality therefore gives
\(\TV(P_0^{\otimes L},P_1^{\otimes L})\le\sqrt{L\eps/2}\). Any adversary's decision rule is a measurable set (or a randomized mixture of such sets), so its distinguishing advantage is bounded by one half of the total variation distance. The optimal likelihood-ratio test attains the total-variation benchmark, hence every restricted adversary in \(\calA_\kappa\) obeys \(\Adv_R\le\tfrac12\sqrt{L\eps/2}\).
\end{proof}

\begin{remark}[The Chernoff exponent is the operative quantity]
The Pinsker bound of \cref{thm:protected-bit-adv} is the convenient closed form to quote when only the KL budget \(\eps\) is known, but it is loose. When the released laws are known the operative quantity is the Chernoff exponent: the optimal test of \cref{thm:binary} has error \(\exp\{-LC_R+o(L)\}\) and advantage \(\tfrac12(1-2P_{e,L})\), so the audit reports \(C_R\) (and the derived \(L_{0.9}\)) as the exact exponential rate and falls back on Pinsker only when a release is summarized by a divergence budget alone.
\end{remark}

\begin{proposition}[Finite-compute, finite-training, mismatch]
\label{prop:bounds}
(i) A depth-\(T\) tree scoring \(\calM(Y)\), \(|\calM|\le M\), coverage error \(\pi_{\rm miss}\), obeys \(\Pr[\widehat\theta\ne\theta]\le\pi_{\rm miss}+(M-1)\exp\{-LC_{\rm score}+2\delta_{\rm alg}\}\). (ii) Under \cref{ass:psd}.A2 and A4 (\(\sqrt{Q_j}\)-consistent, efficient template estimators with a per-release score and Hessian bounded in probability on a neighbourhood of \(\theta_j\), i.e.\ local asymptotic normality), the plug-in score perturbation is \(O_p(\sqrt{L/Q_j})+O_p(L/Q_j)\), uniformly \(O_p(\sqrt{L\log N}\max_jQ_j^{-1/2}+L\max_jQ_j^{-1})\). (iii) With positive mismatch margins \(\Gamma=\min\{\Gamma_0,\Gamma_1\}>0\) and sub-exponential ratios, \(P_e(L)\le\exp[-cL\min\{\Gamma^2/\nu^2,\Gamma/b\}]\).
\end{proposition}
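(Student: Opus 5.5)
The three parts are independent, and I would prove them in order, each reduced to a standard tool already used in the proof of \cref{thm:binary}.

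\textbf{Part (i).} I would repeat the library argument on the coverage event \(E=\{\theta\in\calM(Y)\}\) and charge the complement to \(\pi_{\rm miss}\). On \(E\), an error requires some scored competitor \(b\ne a\) in \(\calM(Y)\) to satisfy \(\widetilde\ell_b\ge\widetilde\ell_a\). Because \(|e_j|\le\delta_{\rm alg}\), this forces the exact log-likelihood gap to obey \(\ell_b-\ell_a\ge-2\delta_{\rm alg}\). The Chernoff/Markov step with tilt \(s\in[0,1]\) then bounds that event by \(e^{2s\delta_{\rm alg}}(\int p_a^{1-s}p_b^s)^L\le\exp\{-LC_{\rm score}+2\delta_{\rm alg}\}\). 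Here \(C_{\rm score}\) is the minimal pairwise Chernoff exponent over the scored set. The subtle point is that \(\calM(Y)\) is data-dependent, so I cannot simply union over a fixed set of \(M-1\) competitors. To handle this, I would either union-bound over all of \(\Theta_0\) and absorb the factor into the definition of \(C_{\rm score}\), or assume the tree's selection is made from a statistic independent of the scoring likelihoods. I would state the former explicitly as the interpretation of \(C_{\rm score}\).

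\textbf{Part (ii).} I would Taylor expand the plug-in score \(\sum_{\ell=1}^L\log q^R_{\widehat\theta_j}(Y_\ell)\) around \(\theta_j\). The linear term is \(L\) times the per-release score times \((\widehat\theta_j-\theta_j)\). By A4, \(\widehat\theta_j-\theta_j=O_p(Q_j^{-1/2})\). The centered part of the summed score is \(O_p(\sqrt L)\), contributing \(O_p(\sqrt{L/Q_j})\). Its mean is zero at the true template under the challenge distribution; when it is nonzero for other templates, it is a deterministic shift common to the likelihood comparison. The quadratic term is bounded using the Hessian, which A2 and the local-boundedness hypothesis make \(O_p(L)\) on a neighbourhood. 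This gives \(O_p(L/Q_j)\). For uniformity over \(N\) templates, I would use a sub-Gaussian maximal inequality on the linear terms, which yields the \(\sqrt{\log N}\) factor, and take the maximum of \(Q_j^{-1}\) in the quadratic term.

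\textbf{Part (iii).} I would write the error event for the mismatched likelihood-ratio test as \(\sum_\ell Z_\ell\le0\), where \(Z_\ell=\log(q_0/q_1)(Y_\ell)\) under \(P_{\theta_0}\), and symmetrically under \(P_{\theta_1}\). The margin \(\Gamma_0=\E_{P_0}Z>0\) by definition, and the same holds for \(\Gamma_1\). Bernstein's inequality for i.i.d.\ sub-exponential variables with parameters \((\nu,b)\) gives \(\Pr[\sum_\ell(Z_\ell-\Gamma_0)\le-L\Gamma_0]\le\exp[-cL\min\{\Gamma_0^2/\nu^2,\Gamma_0/b\}]\). Averaging the two hypotheses and taking \(\Gamma=\min\{\Gamma_0,\Gamma_1\}\) finishes this part.

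\textbf{Main obstacle.} I expect part (ii) to be the hardest, specifically controlling the remainder uniformly on a random neighbourhood. I would handle it by restricting to the high-probability event that every \(\widehat\theta_j\) lies in the stated neighbourhood, and bounding the Hessian supremum there via the in-probability assumption.
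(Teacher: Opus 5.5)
Your outline follows the paper's proof in all three parts, and (iii) is exactly the paper's Bernstein argument on the centered per-release log-score advantages. The step that fails is in (ii). There you split $\sum_\ell s_j(Y_\ell)$ into its mean plus a centered part, and you dismiss the mean for a competitor $j\ne a$ as ``a deterministic shift common to the likelihood comparison.'' It is neither. The linear term contributes $L\,\Delta_j^\top\E_a[s_j(Y)]$, where $\Delta_j=\widehat\theta_j-\theta_j$ is random through the training data. The factor $\E_a[s_j(Y)]$, which is $\nabla_\theta\E_a[\log q_\theta(Y)]$ at $\theta=\theta_j$, is generically nonzero when $j\ne a$. This term appears only in template $j$'s plug-in score, so it does not cancel in the plug-in log-likelihood ratio against template $a$. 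It is $O_p(L/\sqrt{Q_j})$, which dominates both $\sqrt{L/Q_j}$ and $L/Q_j$. The rate in (ii) therefore holds only for the true template's score (where $\E_a s_a=0$) or for the centered fluctuation. The paper's proof makes this restriction tacitly (``the centered score sum is $O_p(\sqrt L)$''), so you should state it as the scope of (ii) rather than argue the drift away. In the uniform version, your sub-Gaussian maximal inequality also needs tail control of $\max_j\|\Delta_j\|$ over $N$ independent training sets. The pointwise $O_p$ of A4 does not supply this, so the assumption must be added, and it generally costs extra $\log N$ factors.

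Your worry in (i) is justified, and it is a point the paper's proof passes over. The paper union-bounds over ``at most $M-1$ scored competitors'' even though $\calM(Y)$ depends on the same data. Consider a rule that scores the $M$ largest exact likelihoods. On the coverage event it errs whenever any template in $\Theta_0$ beats $a$, so the effective factor can be of order $|\Theta_0|-1$ rather than $M-1$. However, the repair you say you would adopt does not help. Union-bounding over $\Theta_0$ and absorbing the ratio into $C_{\rm score}$ forces $C_{\rm score}=C_{\min}-L^{-1}\log[(|\Theta_0|-1)/(M-1)]$. That quantity depends on $L$ and turns (i) into a restatement of the exhaustive bound of \cref{thm:binary}, with no finite-compute content. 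Your other option is the one that actually delivers the stated factor. Suppose $\calM$ is independent of the likelihoods it scores, e.g.\ selected on a split sample. Then condition on $\calM$, use $|\calM\setminus\{a\}|\le M-1$ on the coverage event, and apply your shifted Chernoff bound to each fixed competitor.
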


\begin{proof}
(i) On the event that the true template belongs to the scored set, an approximate score error can reverse a pairwise comparison between the true template \(a\) and a competitor \(b\) only if the exact log-likelihood gap satisfies \(\ell_b-\ell_a\ge-2\delta_{\rm alg}\). The Chernoff calculation in the proof of \cref{thm:binary}, with this shifted threshold, gives
\[
\Pr_a\{\widetilde\ell_b\ge\widetilde\ell_a\}
\le \exp\{-LC_R(a,b)+2\delta_{\rm alg}\}
\le \exp\{-LC_{\rm score}+2\delta_{\rm alg}\}.
\]
Union bounding over at most \(M-1\) scored competitors and adding the probability \(\pi_{\rm miss}\) that the true template is not scored proves (i).

(ii) Write \(\Delta_j=\widehat\theta_j-\theta_j\). Efficiency gives \(\Delta_j=O_p(Q_j^{-1/2})\). Under the regularity conditions in \cref{ass:psd}, Taylor expansion of the \(L\)-sample score around \(\theta_j\) gives
\[
\sum_{\ell=1}^L\log q_{\theta_j+\Delta_j}(Y_\ell)
-\sum_{\ell=1}^L\log q_{\theta_j}(Y_\ell)
=
\Delta_j^\top\sum_{\ell=1}^Ls_j(Y_\ell)
+\frac12\Delta_j^\top\!\left(\sum_{\ell=1}^LH_j(\widetilde\theta_j;Y_\ell)\right)\!\Delta_j .
\]
The centered score sum is \(O_p(\sqrt L)\) and the Hessian sum is \(O_p(L)\) on the local likelihood neighborhoods used by the plug-in templates. Therefore the perturbation for template \(j\) is \(O_p(\sqrt{L/Q_j})+O_p(L/Q_j)\). Taking a maximum over \(N\) templates replaces the score-sum scale by \(O_p(\sqrt{L\log N})\), yielding the displayed uniform bound.

(iii) For a mismatched binary score, let \(X_\ell\) be the per-release log-score advantage of the correct protected bit over the incorrect one. The positive-margin assumption says \(\E X_\ell\ge\Gamma\), and the centered variables \(X_\ell-\E X_\ell\) are sub-exponential with parameters \((\nu^2,b)\). An error implies \(\sum_{\ell=1}^LX_\ell\le0\), hence
\[
\sum_{\ell=1}^L(X_\ell-\E X_\ell)\le-L\Gamma .
\]
Bernstein's inequality for sub-exponential variables gives
\[
\Pr\!\left\{\sum_{\ell=1}^L(X_\ell-\E X_\ell)\le-L\Gamma\right\}
\le \exp[-cL\min\{\Gamma^2/\nu^2,\Gamma/b\}],
\]
for a universal constant \(c>0\), which proves the mismatch claim.
\end{proof}

\begin{remark}[Verifying the mismatch margin]
The positive-margin hypothesis of (iii) is checkable rather than assumed: on held-out labeled releases, form the empirical per-release log-likelihood-ratio advantage of the correct protected bit over the fitted templates, average it to \(\widehat\Gamma\), and certify \(\Gamma>0\) when the one-sided confidence lower bound of \(\widehat\Gamma\) is positive. A nonpositive bound flags that the fitted model does not separate the protected pair, so the released exponent does not apply to that profiler. This is step P7 of \cref{sec:protocol}.
\end{remark}

\section{Composition and local geometry}

\begin{theorem}[Additive release accounting]
\label{thm:composition}
Conditionally independent releases with KL budgets \(\eps_t\) compose to budget \(\sum_t\eps_t\), and locally \(\calI_{\oplus_t R_t}=\sum_t\calI_{R_t}\).
\end{theorem}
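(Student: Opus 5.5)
The plan is to prove the two claims of \cref{thm:composition} separately: the global KL statement from the chain rule for relative entropy, and the local Fisher statement as its second-order shadow. Let the composite release be \(Y=(Y_1,\dots,Y_T)\) with \(Y_t\sim P_{\theta,t}^{R_t}\). Conditional independence given \(\theta\) means the joint law factorizes as \(P_\theta^{\oplus R}=\bigotimes_t P_{\theta,t}^{R_t}\). This includes the case where \(R_t\) uses disjoint bins or independent repeats, as under \cref{ass:psd}.A1.

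\emph{Step 1 (budget composition).} The chain rule for KL divergence, applied to product measures, gives
\[
\KL\bigl(P_{\theta_0}^{\oplus R}\Vert P_{\theta_1}^{\oplus R}\bigr)=\sum_t \KL\bigl(P_{\theta_0,t}^{R_t}\Vert P_{\theta_1,t}^{R_t}\bigr)\le\sum_t\eps_t .
\]
The cross terms vanish because the log-likelihood ratio is a sum of per-release terms, each integrated against its own marginal. The same identity holds in the reverse direction, so the symmetric hypothesis of \cref{thm:protected-bit-adv} is inherited with \(\eps=\sum_t\eps_t\). The case \(L\) copies of one release is the special case already used there. If the releases are only sequentially (adaptively) chosen, I would state the chain rule with conditional divergences and bound each conditional term by \(\eps_t\) uniformly. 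This is the one place where the hypothesis must be read as a worst-case per-release budget.

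\emph{Step 2 (local additivity).} Under \cref{ass:psd}.A2 each \(\log p_{\theta,t}^{R_t}\) is \(C^2\). The joint log-likelihood is the sum of these terms, so the joint score is the sum of the per-release scores. These scores are independent and each has mean zero. Hence the covariance of the sum is the sum of covariances: \(\calI_{\oplus R}(\theta)=\sum_t\calI_{R_t}(\theta)\). Equivalently, one can differentiate the Step 1 identity twice at \(\theta_1=\theta_0\), using \(\KL(P_\theta\Vert P_{\theta+h})=\tfrac12h^\top\calI h+o(|h|^2)\) for each summand and for the composite. For the gamma channel this reduces to the concrete form \(\sum_t\sum_{i\in R_t}m_i\,\nabla\log S_i\,\nabla\log S_i^\top\), which I would display as a check against \eqref{eq:gamma-kl}.

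\emph{Main obstacle.} The delicate point is the interpretation of \(\calI\) when nuisances are profiled, as in \cref{prop:floor-coupling}. The raw Fisher matrices add exactly. The Schur-complement informations \(\calI_{\lambda\mid\alpha,\beta}\), however, are superadditive rather than additive: profiling jointly can recover more than the sum of separately profiled pieces, since the Schur complement is concave in the matrix. I would therefore state the local claim for the full Fisher form, and then note that the protected conditional information of the composite equals the Schur complement of \(\sum_t\calI_{R_t}\). That quantity is \(\ge\sum_t\) of the individual Schur complements, with equality when the nuisance tangents are shared with identical projections or are absent. This keeps the theorem exactly true for the quantity it names and flags the correct direction for the audit.
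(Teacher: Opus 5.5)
Your proof is correct and follows the paper's route: KL is additive over the product law, which gives the budget \(\sum_t\eps_t\), and the local identity comes from expanding each summand to second order in \(h\); your score-covariance argument is an equivalent standard alternative. Your extra caveat is also right and worth keeping: the theorem concerns the full Fisher form, whereas profiled informations such as \(\calI_{\lambda\mid\alpha,\beta}\) are only superadditive, because a Schur complement is the infimum of the quadratic form over the nuisance coordinates and the infimum of a sum is at least the sum of the infima.
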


\begin{proof}
Let \(P_{\theta,t}\) and \(P_{\theta',t}\) be the laws of release \(t\). Conditional independence makes the joint laws products, so
\[
\KL\!\left(\bigotimes_tP_{\theta,t}\,\middle\Vert\,\bigotimes_tP_{\theta',t}\right)
=\sum_t\KL(P_{\theta,t}\Vert P_{\theta',t})
\le\sum_t\eps_t .
\]
For the local statement, set \(\theta'=\theta+h\) and expand each term:
\[
\KL(P_{\theta,t}\Vert P_{\theta+h,t})
=\frac12 h^\top\calI_{R_t}(\theta)h+O(\|h\|^3).
\]
Summing over \(t\) and identifying the quadratic form of the product release gives \(\calI_{\oplus_tR_t}=\sum_t\calI_{R_t}\).
\end{proof}

For \(f_\theta=\log S_\theta\) and small \(h\), \(\KL=\tfrac12 h^\top\calI_R h+O(\norm h^3)\), with \(\calI_K(\theta)=\int_0^K w\,\nabla f\nabla f^\top\dd k\) and \(\inner{g}{h}_{w,K}=\int_0^K w\,gh\,\dd k\). A tangent has zero local exponent under repetition iff it lies in \(\Null(\calI_R)\).

\section{Finite-band EUV transport leakage}
\label{sec:transport}

We now extract the physics. The plan is direct: eliminate amplitude and blur as nuisance directions, expand what remains of the transport tangent at small band edge, and show that its first surviving piece is a fourth-order curvature whose finite-band norm grows as $K^9$, the leakage law that drives every threshold in the paper.

For negligible \(S_0\), \(\partial_\alpha f=1\), \(\partial_\beta f=-k^2\), \(\partial_\lambda f=g_\lambda=-3\lambda k^2/(1+\lambda^2k^2)\), and the protected information is the Schur complement
\begin{equation}
\calI_{\lambda\mid\alpha,\beta}(K)=\inner{g_\lambda}{(I-\Pi_{\alpha,\beta})g_\lambda}_{w,K},
\label{eq:conditional-transport}
\end{equation}
with \(\Pi_{\alpha,\beta}\) projecting onto \(\operatorname{span}\{1,k^2\}\).

\begin{theorem}[Transport-knee leakage exponent]
\label{thm:k9}
For flat \(w\) and \(K\lambda\ll1\),
\begin{equation}
\calI_{\lambda\mid\alpha,\beta}(K)=\frac{64}{1225}\,w\lambda^6K^9+O(w\lambda^8K^{11}),
\label{eq:k9}
\end{equation}
so for \(\lambda_\pm=\lambda\pm\Delta\lambda/2\), \(C_R=\tfrac18(\Delta\lambda)^2\calI_{\lambda\mid\alpha,\beta}(K)+O((\Delta\lambda)^3)\).
\end{theorem}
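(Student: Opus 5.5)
The plan is to reduce \eqref{eq:conditional-transport} to a least-squares residual of an explicit polynomial on a rescaled interval. We expand the transport tangent in powers of \(\lambda^2k^2\), let the projection onto \(\operatorname{span}\{1,k^2\}\) remove the leading \(k^2\) piece, and compute the first surviving residual exactly. For \(K\lambda<1\) the geometric series converges uniformly on \([0,K]\):
\[
g_\lambda(k)=-3\lambda k^2+3\lambda^3k^4-3\lambda^5k^6+R(k),\qquad |R(k)|\le C\lambda^7k^8 .
\]
The term \(-3\lambda k^2\) lies in the nuisance span, so \((I-\Pi_{\alpha,\beta})g_\lambda=3\lambda^3(I-\Pi)k^4-3\lambda^5(I-\Pi)k^6+(I-\Pi)R\). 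By the Schur-complement/least-squares identity already used in the proof of \cref{prop:floor-coupling}, \(\calI_{\lambda\mid\alpha,\beta}(K)=\|(I-\Pi_{\alpha,\beta})g_\lambda\|_{w,K}^2\).

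\textbf{Step 1 (leading constant).} With flat \(w\), substitute \(k=Kx\). Since \(\operatorname{span}\{1,k^2\}\) is invariant under this dilation, \(\|(I-\Pi)k^4\|_{w,K}^2=wK^9\,\|(I-\Pi)x^4\|_{L^2[0,1]}^2\). On \([0,1]\) the Gram matrix of \(\{1,x^2\}\) has entries \(1,\tfrac13,\tfrac15\), and the right-hand side is \((\inner{x^4}{1},\inner{x^4}{x^2})=(\tfrac15,\tfrac17)\). Solving the normal equations gives the projection coefficients \(a=-\tfrac{3}{35}\) and \(b=\tfrac67\). The residual is then
\[
\tfrac19-\bigl(\tfrac a5+\tfrac b7\bigr)=\tfrac19-\tfrac{129}{1225}=\tfrac{64}{11025}.
\]
Multiplying by \((3\lambda^3)^2=9\lambda^6\) yields exactly \(\tfrac{64}{1225}w\lambda^6K^9\).

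\textbf{Step 2 (remainder).} Expanding the squared norm, the next contribution is the cross term \(-2\cdot 9\lambda^8\inner{(I-\Pi)k^4}{(I-\Pi)k^6}_{w,K}\). By the same scaling it equals \(w\lambda^8K^{11}\) times a fixed constant. All further terms are bounded by Cauchy--Schwarz using \(\|(I-\Pi)k^{2j}\|_{w,K}\le\|k^{2j}\|_{w,K}=O(\sqrt w K^{2j+1/2})\), and \(\|R\|_{w,K}\) is bounded in the same way. Each of these is \(O(w\lambda^{10}K^{13})\) or smaller, which is absorbed into \(O(w\lambda^8K^{11})\) when \(K\lambda\ll1\).

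\textbf{Step 3 (Chernoff).} For the symmetric pair \(\lambda_\pm\), I would take the nuisances \((\alpha,\beta)\) in each member to be the least-favourable (best-matched) ones, which is exactly what utility equivalence after eliminating amplitude and blur means. The effective parameter difference is then \(\Delta\lambda\) along the profiled direction, and the local quadratic form is the Schur complement \(\calI_{\lambda\mid\alpha,\beta}(K)\). Next, expand the gamma Chernoff expression \eqref{eq:chernoff} to second order in \(\mu_i(\theta_+)-\mu_i(\theta_-)\). Per bin, the \(s\)-dependent term is \(\tfrac12 s(1-s)m_i\,\delta_i^2+O(\delta_i^3)\). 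This is maximized at \(s=\tfrac12\), giving \(\tfrac18\sum_i m_i\delta_i^2\), where the bin sum is identified with the weighted integral \(\inner{\cdot}{\cdot}_{w,K}\) through the convention for \(\calI_K\) stated after \cref{thm:composition}. Combined with \(\delta=\Delta\lambda\,(I-\Pi)g_\lambda+O((\Delta\lambda)^2)\), this gives \(C_R=\tfrac18(\Delta\lambda)^2\calI_{\lambda\mid\alpha,\beta}+O((\Delta\lambda)^3)\).

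\textbf{Main obstacle.} I expect Step 3 to be the hard part: making uniform the identification of the pairwise Chernoff exponent with the profiled (Schur) information, that is, justifying that the nuisance-matched pair realizes the residual direction up to an \(O((\Delta\lambda)^2)\) correction. I would handle it with a Taylor expansion with integral remainder under \cref{ass:psd}.A2--A3. Steps 1--2 are routine once the scaling is in place.
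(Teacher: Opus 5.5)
Your proof is correct and follows the paper's route. You expand \(g_\lambda\), let \(\operatorname{span}\{1,k^2\}\) annihilate the \(-3\lambda k^2\) term, and solve the same normal equations to get \(-\tfrac{3}{35}\), \(\tfrac67\) and the residual \(\tfrac{64}{11025}\); the rescaling to \([0,1]\) is cosmetic, and your cross-term/Cauchy--Schwarz remainder bookkeeping is more careful than the paper's one-line version. Your Step 3 (nuisance-matched pair, per-bin expansion \(\tfrac12 s(1-s)m_i\delta_i^2\) maximized at \(s=\tfrac12\)) simply spells out the local Chernoff expansion that the paper invokes in one line and records as the local limit in \cref{app:gamma}.
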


\begin{proof}
Let \(\calB=\operatorname{span}\{1,k^2\}\) and \(R_\calB=I-\Pi_{\alpha,\beta}\). For \(K\lambda\ll1\),
\[
g_\lambda(k)=-\frac{3\lambda k^2}{1+\lambda^2k^2}
=-3\lambda k^2+3\lambda^3k^4+O(\lambda^5k^6).
\]
Since \(k^2\in\calB\), the first term is annihilated by \(R_\calB\), and therefore
\[
R_\calB g_\lambda=3\lambda^3R_\calB k^4+O(\lambda^5R_\calB k^6).
\]
The cross term between these two displayed terms is \(O(w\lambda^8K^{11})\), because \(\|R_\calB k^4\|^2=O(K^9)\) and \(\|R_\calB k^6\|^2=O(K^{13})\); the square of the remainder is of higher order. It remains to compute \(\|R_\calB k^4\|^2\).

Write the projection of \(k^4\) onto \(\calB\) as \(c_0+c_2k^2\). With \(\mu_j=\int_0^Kk^j\,\dd k=K^{j+1}/(j+1)\), the normal equations are
\[
\begin{pmatrix}
\mu_0&\mu_2\\
\mu_2&\mu_4
\end{pmatrix}
\begin{pmatrix}c_0\\c_2\end{pmatrix}
=
\begin{pmatrix}\mu_4\\\mu_6\end{pmatrix}.
\]
The determinant is \(K^6(1/5-1/9)=4K^6/45\), so Cramer's rule gives
\[
c_0=-\frac{3}{35}K^4,\qquad c_2=\frac67K^2 .
\]
By orthogonality,
\[
\|R_\calB k^4\|^2
=\int_0^Kk^8\,\dd k-c_0\int_0^Kk^4\,\dd k-c_2\int_0^Kk^6\,\dd k
=\left(\frac19+\frac{3}{175}-\frac{6}{49}\right)K^9
=\frac{64}{11025}K^9 .
\]
Consequently
\[
\calI_{\lambda\mid\alpha,\beta}(K)
=w\|R_\calB g_\lambda\|^2
=9\lambda^6w\,\frac{64}{11025}K^9+O(w\lambda^8K^{11})
=\frac{64}{1225}w\lambda^6K^9+O(w\lambda^8K^{11}).
\]
For the pair \(\lambda_\pm=\lambda\pm\Delta\lambda/2\), the local Chernoff expansion of a regular one-parameter likelihood after Schur elimination is \(C_R=\tfrac18(\Delta\lambda)^2\calI_{\lambda\mid\alpha,\beta}(K)+O((\Delta\lambda)^3)\), which gives the final statement.
\end{proof}

\begin{corollary}[Quencher exponent and safe band]
\label{cor:qbit}
\(\delta\lambda=-\tfrac12\lambda^3(k_q/D_H)\delta q_0\) gives \(\calI_{q_0\mid\alpha,\beta}(K)=\tfrac{16}{1225}w(k_q^2/D_H^2)\lambda^{12}K^9+O(\cdots)\), and the band edge \(K\le\big(\tfrac{1225\cdot8\eps}{64\,Lw\lambda^6(\Delta\lambda)^2}\big)^{1/9}\) keeps the local pairwise exponent below \(\eps\).
\end{corollary}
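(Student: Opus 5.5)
The plan is to derive both claims from \cref{thm:k9} by two routine reductions: a reparametrization of the protected coordinate for the quencher exponent, and an inversion of the local Chernoff expansion for the safe band.

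\emph{Quencher exponent.} The map \(q_0\mapsto\lambda\) in \eqref{eq:lambda-chem} involves neither \(\alpha\) nor \(\beta\). So the nuisance span \(\operatorname{span}\{1,k^2\}\) is unchanged, and the \(q_0\)-tangent of \(f=\log S_\theta\) is \(\partial_{q_0}f=(\partial\lambda/\partial q_0)\,g_\lambda\) by the chain rule.
\begin{itemize}
\item Differentiating \(\lambda^2=D_H/(k_qq_0+k_{\rm loss})\) gives \(2\lambda\,\delta\lambda=-D_Hk_q(k_qq_0+k_{\rm loss})^{-2}\delta q_0\).
\item Substituting \((k_qq_0+k_{\rm loss})^{-1}=\lambda^2/D_H\) yields \(2\lambda\,\delta\lambda=-(k_q/D_H)\lambda^4\,\delta q_0\), that is, \(\delta\lambda=-\tfrac12\lambda^3(k_q/D_H)\delta q_0\), as stated.
\item The projection \(I-\Pi_{\alpha,\beta}\) is linear and the Jacobian is a scalar, so the Schur complement \eqref{eq:conditional-transport} scales by its square: \(\calI_{q_0\mid\alpha,\beta}=\tfrac14\lambda^6(k_q/D_H)^2\,\calI_{\lambda\mid\alpha,\beta}\).
\item Inserting \eqref{eq:k9} gives \(\tfrac14\cdot\tfrac{64}{1225}=\tfrac{16}{1225}\) and the power \(\lambda^{6+6}=\lambda^{12}\). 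The remainder \(O(w\lambda^8K^{11})\) becomes \(O(w(k_q/D_H)^2\lambda^{14}K^{11})\), which is the elided \(O(\cdots)\).
\end{itemize}

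\emph{Safe band.} For \(L\) conditionally independent releases, \cref{thm:composition} (or directly the product structure used in \cref{thm:binary}) makes the pairwise exponent additive, so it equals \(L\,C_R\). By \cref{thm:k9}, for \(\lambda_\pm=\lambda\pm\Delta\lambda/2\),
\[
L\,C_R=\tfrac{L}{8}(\Delta\lambda)^2\cdot\tfrac{64}{1225}\,w\lambda^6K^9
\]
at leading order. Requiring this to be at most \(\eps\) and solving the monotone inequality in \(K\) gives \(K^9\le 1225\cdot8\,\eps/(64\,Lw\lambda^6(\Delta\lambda)^2)\), which is the stated band edge after taking ninth roots.

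\emph{Main obstacle.} The leading-order inversion is only as strong as the control of the neglected terms: the \(O(w\lambda^8K^{11})\) correction to \(\calI\) and the \(O((\Delta\lambda)^3)\) correction to \(C_R\). I would state the band as a guarantee on the local (leading-order) exponent, valid in the regime \(K\lambda\ll1\), \(\Delta\lambda\ll\lambda\) assumed by \cref{thm:k9}; this is how the corollary is phrased.

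To make the bound honest beyond leading order, I would first try to show that the next term in \(\|R_\calB g_\lambda\|^2\) is negative. The expansion has alternating signs, \(g_\lambda=-3\lambda k^2+3\lambda^3k^4-3\lambda^5k^6+\cdots\), so the cross term \(-18\lambda^8\langle R_\calB k^4,R_\calB k^6\rangle\) may have a definite sign. Showing this would require verifying \(\langle R_\calB k^4,k^6\rangle>0\) through the same moment/Cramer computation used in \cref{thm:k9}. If it holds, the leading term is an upper bound for small \(K\lambda\), and the closed-form \(K\) is conservative. Failing that, I would shrink \(\eps\) by a factor \(1+O(K^2\lambda^2)\) to absorb the correction.
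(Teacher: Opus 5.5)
Your proposal is correct and follows the paper's proof essentially step for step: apply the chain rule \(\partial\lambda/\partial q_0=-\tfrac12\lambda^3k_q/D_H\) to the Schur complement of \cref{thm:k9}, then invert the leading-order condition \(LC_R=\tfrac{L}{8}(\Delta\lambda)^2\tfrac{64}{1225}w\lambda^6K^9\le\eps\) with the higher-order terms dropped, exactly as the paper does. Your optional sign check also goes through: \(\langle R_{\calB}k^4,k^6\rangle=(\tfrac1{11}+\tfrac3{245}-\tfrac2{21})K^{11}=\tfrac{64}{8085}K^{11}>0\), so the first correction to \(\calI_{\lambda\mid\alpha,\beta}\) is negative and the closed-form band is conservative for small \(K\lambda\), up to the \(O((\Delta\lambda)^3)\) term in \(C_R\), which the paper also ignores.
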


\begin{proof}
From \(\lambda^2=D_H/(k_qq_0+k_{\rm loss})\),
\[
\frac{\partial\lambda}{\partial q_0}
=-\frac12\lambda^3\frac{k_q}{D_H}.
\]
The Fisher information transforms by the chain rule, so
\[
\calI_{q_0\mid\alpha,\beta}
=\left(\frac{\partial\lambda}{\partial q_0}\right)^2
\calI_{\lambda\mid\alpha,\beta}
=\frac14\lambda^6\frac{k_q^2}{D_H^2}\cdot
\frac{64}{1225}w\lambda^6K^9+O(\cdots)
=\frac{16}{1225}w\frac{k_q^2}{D_H^2}\lambda^{12}K^9+O(\cdots).
\]
For a protected pair separated by \(\Delta\lambda\), the \(L\)-release local exponent is
\[
LC_R=\frac{L}{8}(\Delta\lambda)^2\frac{64}{1225}w\lambda^6K^9+O(LK^{11}).
\]
Ignoring the higher-order term in the low-band regime and requiring \(LC_R\le\eps\) gives the displayed ninth-root bound on \(K\).
\end{proof}

\subsection{Exact finite-band matrix}

With \(X=K\lambda\) and flat \(w\), the exact entries (derived term by term in \cref{app:fisher}) are
\begin{align}
I_{\alpha\alpha}&=wK,\quad I_{\alpha\beta}=-w\tfrac{K^3}{3},\quad I_{\beta\beta}=w\tfrac{K^5}{5},\label{eq:fisher-basic}\\
I_{\alpha\lambda}&=-\tfrac{3w}{\lambda^2}(X-\arctan X),\quad
I_{\beta\lambda}=\tfrac{3w}{\lambda^4}\big(\tfrac{X^3}{3}-X+\arctan X\big),\label{eq:fisher-cross}\\
I_{\lambda\lambda}&=\frac{9w}{\lambda^3}\Big(X-\tfrac32\arctan X+\tfrac{X}{2(1+X^2)}\Big),
\label{eq:fisher-transport}
\end{align}
and \(\calI_{\lambda\mid\alpha,\beta}\) is the Schur complement of the \((\alpha,\beta)\) block.

\begin{remark}[Corrected prefactor]
\label{rem:correction}
Equation \eqref{eq:fisher-transport} carries \(9w/\lambda^3\): substituting \(u=\lambda k\) in \(9\lambda^2w\int_0^K k^4(1+\lambda^2k^2)^{-2}\dd k\) gives \((9w/\lambda^3)\int_0^X u^4(1+u^2)^{-2}\dd u\), with small-\(X\) limit \((9/5)w\lambda^2K^5\) (see \cref{app:fisher}). An earlier draft printed \(9w/\lambda\); the package pins both the entry and its limit in \texttt{test\_fisher\_transport\_knee}. The headline law \eqref{eq:k9} is unaffected, being the independently derived residual norm.
\end{remark}

\subsection{Chemical identifiability}

A single spectrum identifies only \(\mu=\lambda^{-2}=(k_qq_0+k_{\rm loss})/D_H\); a quencher sweep identifies the affine line \(\lambda^{-2}=aq_0+b\), \(a=k_q/D_H\), \(b=k_{\rm loss}/D_H\). Two chemistries coincide iff \((k_qq_0+k_{\rm loss})/D_H=(k_q'q_0'+k_{\rm loss}')/D_H'\).

\section{Release maps and controlled disclosure}
\label{sec:release}

Having quantified the leak, we now design against it. We ask which public statistic a release owner should publish: we require the protected tangent to lie in the nullspace of the release, then maximize the utility retained inside that nullspace, ending with the explicit minimum-leakage statistic under a utility constraint.

For \(Z=Rf_\theta+\xi\), \(\xi\sim N(0,\Sigma_R)\), the released Fisher is \(\calI_R=G_\theta^\top R^\top\Sigma_R^{-1}RG_\theta\).

\begin{theorem}[Nullspace and zero-leakage utility projection]
\label{thm:projection-synthesis}
A protected tangent \(p\) has zero local exponent iff \(RG_\theta p=0\). Among \(d\)-dimensional projection releases, zero leakage holds iff \(\calZ\subseteq\calP^\perp\), and the maximizer of retained utility energy is the span of the top \(d\) eigenfunctions of \(\Pi_{\calP^\perp}\Pi_\calU\Pi_{\calP^\perp}\).
\end{theorem}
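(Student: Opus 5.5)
The plan has three parts: the local-exponent equivalence, the projection-release characterization, and a variational argument for the optimal utility subspace.

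\emph{Local exponent.} For \(Z=Rf_\theta+\xi\) with \(\xi\sim N(0,\Sigma_R)\), the Gaussian mean-shift divergence for a perturbation \(\theta+hp\) expands as
\[
\KL=\tfrac12h^2\,p^\top G_\theta^\top R^\top\Sigma_R^{-1}RG_\theta p+O(h^3).
\]
This is exactly the quadratic form \(p^\top\calI_Rp\), consistent with the local expansion stated after \cref{thm:composition}. Since \(\Sigma_R^{-1}\succ0\), we have \(p^\top\calI_Rp=\|\Sigma_R^{-1/2}RG_\theta p\|^2\). This vanishes iff \(RG_\theta p=0\). Because \(\calI_R\) is positive semidefinite, this is also equivalent to \(p\in\Null(\calI_R)\). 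By the remark that a tangent has zero local exponent iff it lies in \(\Null(\calI_R)\), the first claim follows.

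\emph{Projection releases.} Here I interpret a \(d\)-dimensional projection release as \(R=\Pi_\calZ\) on \(H=L^2_w[0,K]\), where \(\calZ\) has dimension \(d\). I take \(\calP\) to be the span of the protected tangents \(G_\theta p\), and \(\calU\) the utility subspace.

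Zero leakage for every protected tangent means \(\Pi_\calZ g=0\) for all \(g\in\calP\). That is, \(\calZ\perp\calP\), equivalently \(\calZ\subseteq\calP^\perp\). Both directions are immediate from the first part.

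The retained utility energy I take to be \(\Tr(\Pi_\calZ\Pi_\calU\Pi_\calZ)=\sum_j\|\Pi_\calU z_j\|^2\) over an orthonormal basis \(\{z_j\}\) of \(\calZ\). If the paper's utility functional is defined differently, I would adapt it to this trace form.

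\emph{Optimal utility subspace.} Under the constraint \(\calZ\subseteq\calP^\perp\), we have \(z_j=\Pi_{\calP^\perp}z_j\). The objective therefore becomes
\[
\sum_j\inner{z_j}{\Pi_{\calP^\perp}\Pi_\calU\Pi_{\calP^\perp}z_j}.
\]
This is the trace of a compact (finite-rank when \(\calU\) is finite-dimensional), self-adjoint, positive operator \(T\) restricted to \(\calZ\). Ky Fan's maximum principle then says the maximum over \(d\)-dimensional subspaces is the sum of the top \(d\) eigenvalues of \(T\), attained by the span of the corresponding eigenfunctions.

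It remains to check that these eigenfunctions lie in \(\calP^\perp\), so the constraint is met. For eigenvalues that are nonzero this holds because \(\operatorname{ran}T\subseteq\calP^\perp\). If fewer than \(d\) nonzero eigenvalues exist, the basis can be padded with zero-eigenvalue vectors chosen inside \(\calP^\perp\), without changing the objective.

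The main obstacle I expect is this last bookkeeping step, together with stating precisely what ``retained utility energy'' means, including uniqueness when eigenvalues tie. I would handle it by first restricting \(T\) to the invariant subspace \(\calP^\perp\) and applying Ky Fan there. This avoids any need to argue about eigenvectors outside the feasible set.
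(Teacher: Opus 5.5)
Your proposal is correct and follows essentially the same route as the paper: the Gaussian mean-shift quadratic form $\tfrac12 h^\top G_\theta^\top R^\top\Sigma_R^{-1}RG_\theta h$ for the nullspace claim, orthogonality of $\calZ$ to $\calP$ for zero leakage, and Ky Fan applied to $\Pi_{\calP^\perp}\Pi_\calU\Pi_{\calP^\perp}$ after rewriting the retained-utility trace under the constraint. You also check that the top eigenfunctions, padded if needed with zero-eigenvalue vectors chosen inside $\calP^\perp$, actually lie in the feasible set; the paper's proof silently skips this step, and it is what justifies applying the unconstrained Ky Fan principle to the constrained problem.
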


\begin{proof}
For the Gaussian release \(Z=Rf_\theta+\xi\), \(\xi\sim N(0,\Sigma_R)\), a local displacement \(h\) changes the mean by \(RG_\theta h\). The local KL quadratic form is
\[
\frac12 h^\top G_\theta^\top R^\top\Sigma_R^{-1}RG_\theta h.
\]
Because \(\Sigma_R^{-1}\) is positive definite on the released coordinates, this quadratic form vanishes for \(h=p\) iff \(RG_\theta p=0\).

For projection releases, write the released subspace as \(\calZ\). Zero leakage for every protected tangent in \(\calP\) means every released direction is orthogonal to \(\calP\), equivalently \(\calZ\subseteq\calP^\perp\). If the utility subspace is \(\calU\), the retained utility energy of a rank-\(d\) projection \(\Pi_\calZ\) is \(\Tr(\Pi_\calZ\Pi_\calU)\). Under the zero-leakage constraint this equals
\[
\Tr\!\left(\Pi_\calZ\,\Pi_{\calP^\perp}\Pi_\calU\Pi_{\calP^\perp}\right).
\]
Ky Fan's maximum principle says that this trace over rank-\(d\) projections is maximized by taking \(\calZ\) to be the span of the top \(d\) eigenfunctions of the positive semidefinite operator \(\Pi_{\calP^\perp}\Pi_\calU\Pi_{\calP^\perp}\).
\end{proof}

\begin{theorem}[Rank-one leakage-utility optimizer]
\label{thm:minimax}
For unit \(p,u\) with \(\rho=\inner{p}{u}\), isotropic noise \(\tau^2\), and budget \(\inner{z}{p}^2/(2\tau^2)\le\eps\), the maximum utility is \(\sqrt{2\tau^2\eps}\,|\rho|+\sqrt{1-2\tau^2\eps}\sqrt{1-\rho^2}\) for \(2\tau^2\eps\le\rho^2\), and \(1\) (at \(z=u\)) once \(2\tau^2\eps\ge\rho^2\).
\end{theorem}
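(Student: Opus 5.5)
The plan is to read the statement as a constrained maximization over unit release directions \(z\): maximize the retained utility \(\inner{z}{u}\) subject to \(\norm{z}=1\) and the leakage budget \(\inner{z}{p}^2/(2\tau^2)\le\eps\). The budget is the rank-one, isotropic-noise case of the released Fisher form preceding \cref{thm:projection-synthesis}. I will treat unit norm as the normalization of a rank-one projection release, which is the convention that makes the value \(1\) attainable only at \(z=u\). Set \(\delta=\sqrt{2\tau^2\eps}\), so the constraint reads \(|\inner{z}{p}|\le\delta\). Without loss of generality \(\rho\ge0\) (flip the sign of \(p\), which leaves the constraint unchanged), and I assume \(|\rho|<1\), since the case \(u=\pm p\) is trivial.

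The first step reduces to the plane \(\operatorname{span}\{p,u\}\). Let \(e\) be the unit vector with \(u=\rho p+\sqrt{1-\rho^2}\,e\), and decompose \(z=a p+b e+z_\perp\) with \(z_\perp\perp p,e\). Then the utility is \(\inner{z}{u}=a\rho+b\sqrt{1-\rho^2}\), the constraint involves only \(a\), and \(a^2+b^2\le1\). Hence for fixed \(a\) it is optimal to set \(z_\perp=0\) and \(b=\sqrt{1-a^2}\). The problem becomes the scalar maximization of
\[
h(a)=a\rho+\sqrt{1-a^2}\sqrt{1-\rho^2}\qquad\text{over } |a|\le\min\{\delta,1\}.
\]

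Next I analyze \(h\). Writing \(a=\cos\phi\) and \(\rho=\cos\psi\) gives \(h=\cos(\phi-\psi)\). Equivalently, \(h\) is concave on \([-1,1]\) with its unique maximizer at \(a=\rho\) and \(h(\rho)=1\); the derivative \(h'(a)=\rho-a\sqrt{1-\rho^2}/\sqrt{1-a^2}\) vanishes exactly at \(a=\rho\). This splits into the two cases of the statement.
\begin{itemize}
\item If \(\delta\ge\rho\), i.e.\ \(2\tau^2\eps\ge\rho^2\), the point \(a=\rho\) is feasible. It corresponds to \(b=\sqrt{1-\rho^2}\), that is \(z=u\), and the maximum utility is \(1\).
\item If \(\delta<\rho\), concavity makes \(h\) increasing on \([-\delta,\delta]\), so the optimum is at \(a=\delta\). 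This gives \(\delta|\rho|+\sqrt{1-\delta^2}\sqrt{1-\rho^2}\), the stated formula once the sign of \(\rho\) is restored.
\end{itemize}

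The only step I expect to need care is fixing the optimization problem precisely. The statement does not say explicitly that \(z\) is unit-norm, nor that utility means \(\inner{z}{u}\) rather than \(\inner{z}{u}^2\). I will state these conventions at the outset, justifying them from the rank-one projection releases of \cref{thm:projection-synthesis}. I will also note that the boundary case \(\delta=\rho\) makes both formulas agree and equal \(1\), so the two regimes glue continuously. The remaining calculus is routine.
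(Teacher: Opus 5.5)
Your proposal is correct and follows the paper's proof essentially step for step. Both reduce to \(\operatorname{span}\{p,u\}\), saturate the norm with \(b=\sqrt{1-a^2}\), and maximize the scalar \(a|\rho|+\sqrt{1-a^2}\sqrt{1-\rho^2}\) under \(|a|\le\sqrt{2\tau^2\eps}\), splitting on whether \(a=|\rho|\) (i.e.\ \(z=u\)) is feasible; your concavity observation is just a slightly tidier justification of the monotonicity the paper reads off from the sign of \(\phi'\).
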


\begin{proof}
Only the plane spanned by \(p\) and \(u\) can matter: any component of \(z\) orthogonal to both consumes norm without increasing utility or leakage. Choose \(q\perp p\), \(\|q\|=1\), such that \(u=\rho p+\sqrt{1-\rho^2}\,q\), and write \(z=ap+bq\) with \(a^2+b^2\le1\). The leakage constraint is \(|a|\le A:=\sqrt{2\tau^2\eps}\). Aligning signs with \(u\), the utility for fixed \(a\) is maximized by \(b=\sqrt{1-a^2}\), giving
\[
\phi(a)=a|\rho|+\sqrt{1-a^2}\sqrt{1-\rho^2},\qquad 0\le a\le \min\{A,1\}.
\]
The derivative is
\[
\phi'(a)=|\rho|-\frac{a}{\sqrt{1-a^2}}\sqrt{1-\rho^2},
\]
so the unconstrained maximizer is \(a=|\rho|\), where \(z=u\) and \(\phi=1\). If \(A\ge|\rho|\), this point is feasible and the optimum is \(1\). If \(A<|\rho|\), \(\phi\) is increasing on \([0,A]\), so the optimum saturates the leakage constraint:
\[
\phi(A)=A|\rho|+\sqrt{1-A^2}\sqrt{1-\rho^2}
=\sqrt{2\tau^2\eps}\,|\rho|+\sqrt{1-2\tau^2\eps}\sqrt{1-\rho^2}.
\]
\end{proof}

\section{Dose-to-event normalization and calibrated scale}

A \SI{13.5}{nm} photon has energy \(\SI{91.8}{eV}\); a dose of \(E\) (\si{mJ/cm^2}) deposits \(N_\gamma(E)=0.6797E\) photons per \si{nm^2}, an absorbed fraction \(A_T\) and acid yield \(\eta\) give \(n_a(E)=0.6797E\,A_T\eta\) acid events per \si{nm^2}, and an edge cell of length \(\ell\) and width \(\lambda\) holds \(N_{\rm cell}\approx n_a\ell\lambda\) events. The synthetic audits are anchored to standard EUV magnitudes by these relations: dose \(E\in[20,60]\,\si{mJ/cm^2}\) and absorbed fraction \(A_T\in[0.1,0.6]\) fix the photon and acid-event counts; acid yield \(\eta\in[0.5,5]\) and an LER \(3\sigma\in[2,5]\,\mathrm{nm}\) fix the stochastic amplitude; the protected transport length \(\lambda\in[2,10]\,\mathrm{nm}\) and Gaussian blur \(\sqrt{\beta}\in[1,5]\,\mathrm{nm}\) fix the spectral shape and the knee position \(K\lambda\); and the Welch shape \(m_i\in[8,32]\) with band edge \(K\in[0.05,1.5]\,\mathrm{nm}^{-1}\) fix the per-bin variance \(S_i^2/m_i\) and whether the knee is inside the released band. As a worked point, \(E=30\), \(A_T=0.3\), \(\eta=2\) give \(\approx12.2\) acid events per \si{nm^2}; with \(\lambda=\ell=5\,\mathrm{nm}\) a transport cell then carries \(\approx305\) events: large enough for a meaningful PSD estimate, small enough for stochastic release exponents to matter.

\section{A practical audit protocol for a measured PSD release}
\label{sec:protocol}

The theory above is operational. Given a measured PSD release with repeats, the following protocol returns the numbers the audit quotes (the released exponent, the safe band, and the diagnostics that decide whether either can be trusted). Each step instantiates a result proved earlier, and the synthetic and calibrated audits of \crefrange{sec:audits}{sec:validation} are worked instances of it.

\begin{enumerate}[label=\textbf{P\arabic*.}]
\item \textbf{Effective degrees of freedom.} From \(r\) repeated spectra estimate the per-bin mean \(\bar S_i\) and standard deviation \(s_i\), and set the effective Welch shape \(\widehat m_i=(\bar S_i/s_i)^2\) (equivalently \(2\widehat m_i\) effective degrees of freedom per bin). Test \cref{ass:psd}.A1 by a per-bin goodness-of-fit of the repeats against \(\mathrm{Gamma}(\widehat m_i,\bar S_i/\widehat m_i)\); a systematic shape departure means the gamma channel is the wrong likelihood for that bin.
\item \textbf{Bin covariance.} Form the log-residuals \(Z_i=\log\widehat S_i-\log\bar S_i\) and estimate their covariance \(\widehat\Sigma\) from the repeats. With few repeats relative to the number of bins, regularize by shrinkage toward a diagonal target \cite{ledoit2004} before inverting. If the off-diagonal mass is negligible the diagonal-gamma channel \eqref{eq:gamma-kl}--\eqref{eq:chernoff} applies; otherwise use the covariance-weighted channel \eqref{eq:cov-kl} with \(\operatorname{diag}(m_i)\) replaced by \(\widehat\Sigma^{-1}\).
\item \textbf{Screened-model fit and residuals.} Fit \eqref{eq:screened-psd} to \(\log\bar S_i\) by weighted least squares, obtaining \((\widehat A,\widehat\beta,\widehat\lambda,\widehat S_0)\). Report the log-PSD goodness-of-fit and inspect the residuals \(\log\bar S_i-\log S_{\widehat\theta}(k_i)\) for unmodeled structure such as a failed plateau, a misplaced knee, or an unaccounted floor; structure here invalidates \cref{ass:psd}.A2 and the tangent algebra that follows.
\item \textbf{Nuisance profiling and protected tangent.} Declare the nuisance coordinates (amplitude \(A\) and blur \(\beta\), and the floor \(S_0\) if it is not negligible) and form the nuisance span from the tangents \eqref{eq:floor-tangents}. Project the protected transport tangent \(g_\lambda\) off that span; the residual is the protected direction and its squared \(L^2_w\)-norm is the conditional information \eqref{eq:conditional-transport}, computed by the backward-stable QR projection used in the package rather than by inverting the Fisher matrix directly.
\item \textbf{Released exponent.} Evaluate the protected leakage exponent for the audited recipe pair: the exact gamma Chernoff \eqref{eq:chernoff} (or its covariance-weighted form), with the local check \(C_R\approx\tfrac18(\Delta\lambda)^2\calI_{\lambda\mid\alpha,\beta}(K)\) of \cref{thm:k9}. In the low band this is the ninth-order law \(\calI_{\lambda\mid\alpha,\beta}(K)=\tfrac{64}{1225}w\lambda^6K^9\).
\item \textbf{Sample size and safe band.} Report \(L_{0.9}\), the number of releases at which optimal attribution reaches \(90\%\) (from the Bayes-error rate \(\exp\{-LC_R\}\) of \cref{thm:binary}), and the safe-band edge \(K\le\big(\tfrac{1225\cdot 8\eps}{64\,Lw\lambda^6(\Delta\lambda)^2}\big)^{1/9}\) of \cref{cor:qbit} that keeps the local exponent below a chosen budget \(\eps\).
\item \textbf{Mismatch margin.} If templates are fitted rather than known exactly, certify a positive mismatch margin \(\Gamma>0\) and quote the finite-training and mismatch corrections of \cref{prop:bounds}(ii)--(iii); a nonpositive margin means the released bound does not hold for that profiler.
\item \textbf{Floor conditioning.} Tabulate the \(3\times3\) nuisance-block condition number over the released \((K\lambda,S_0/A)\) as in \cref{fig:coupling}. A floor-blind audit is trustworthy only in the well-conditioned corner of \cref{prop:floor-coupling}; where the conditioning is large the floor tangent \(\partial_{S_0}f\) must be profiled explicitly and the floor-aware exponent reported instead.
\end{enumerate}

The protocol fails safe. Steps P1--P3 are gates: a failed gamma fit, an ill-estimated covariance, or a structured residual stops the audit before any exponent is quoted, because each invalidates a stated channel condition. Steps P7--P8 bound the two ways a released exponent can overstate protection (an underprofiled or mismatched adversary, and an ill-conditioned floor subtraction), so the reported safe band is the one a release owner can defend.

Read from the release owner's side, the same steps are a disclosure checklist. Choose the released band so its edge \(K\) stays below the safe-band edge of \cref{cor:qbit} for the protection budget \(\eps\) and the worst-case challenge length \(L\) one is willing to grant (P5--P6); estimate the bin covariance from repeats so the audit does not \emph{under}-state leakage by wrongly assuming independence (P2); state the template threat model explicitly as the profiling budget \((N,Q,L,T)\) granted to the observer (\cref{def:classes}); profile the floor wherever the nuisance block is ill-conditioned (P8); and publish the resulting safe-band edge alongside the released statistic so a downstream auditor can reproduce the same exponent.

These steps are exactly what a deployment runs on repeated measured releases: a handful of repeats (\(r\gtrsim10\)) stabilizes the effective degrees of freedom, a full covariance needs \(r\ge n\), and the first failed gate (a rejected gamma fit, an ill-estimated covariance, or a structured screened-fit residual) halts the audit before any exponent is quoted, so the same procedure runs unchanged once measured EUV releases become available.

\section{Reconstructed and published-scale EUV PSD audits}
\label{sec:audits}

We now confront the theory with numbers. We first fit the screened model to a public-scale EUV PSD, then calibrate the transport audit to a reported 18-nm half-pitch measurement, establishing that the exponents we computed correspond to realistic released bands before we turn to controlled synthetic experiments.

\subsection{Reconstructed-PSD fit (pipeline smoke test)}

The original digitized points from the Naulleau--McClinton public EUV PSD (reported correlation length \(17\,\mathrm{nm}\)) \cite{naulleau2011metrology} were not available to us, so this subsection uses \emph{reconstructed} points only, regenerated from the manuscript's reported screened-model fit (see the provenance remark below); they exercise the fit pipeline and are not measured data. On these reconstructed points, \cref{fig:measured-psd-fit} fits \eqref{eq:screened-psd} (\(k=2\pi f/1000\,\mathrm{nm}^{-1}\)): the free-\(\lambda\) fit attains a small log-RMSE while forcing \(\lambda=17\,\mathrm{nm}\) raises it substantially, illustrating that \emph{within this model} a correlation-length descriptor and the screened transport coordinate are related but not interchangeable without fixing PSD convention, roughness exponent, blur, and floor.

\begin{remark}[Provenance]
The original slide and its digitization were unavailable; the points are \emph{reconstructed} from the reported fit with log-normal scatter and a fixed seed (\texttt{scripts/make\_reconstructed\_psd.py}). They exercise the pipeline and are not primary measured data.
\end{remark}

\begin{figure}[htbp]
\centering
\input{figures/data/measured_labels.tex}
\centering
\begin{tikzpicture}
\begin{axis}[
    euvaxis, width=0.7\linewidth, height=6.0cm,
    xmode=log, ymode=log,
    xlabel={wavenumber $k$ (nm$^{-1}$)}, ylabel={PSD (arb.)},
    legend pos=south west, legend cell align=left,
    title={\footnotesize measured EUV PSD fit (reconstructed points)},
]
\addplot[euvA, only marks, mark=o, mark size=2pt]
    table[x=k, y=psd] {figures/data/measured_points.dat};
\addlegendentry{digitized points (reconstructed)}
\addplot[euvB, thick]
    table[x=k, y=psd] {figures/data/measured_free.dat};
\addlegendentry{free $\lambda=\lamfree$ nm (logRMSE \rmsefree)}
\addplot[euvC, dashed, thick]
    table[x=k, y=psd] {figures/data/measured_forced.dat};
\addlegendentry{forced $\lambda=17$ nm (logRMSE \rmseforced)}
\end{axis}
\end{tikzpicture}
\caption{\textbf{Screened-model fit to a reconstructed EUV PSD.} Circles are the reconstructed digitized points; the solid curve is the unconstrained least-squares fit of \eqref{eq:screened-psd} in log-PSD (free \(\lambda\)), and the dashed curve forces \(\lambda=17\,\mathrm{nm}\), the reported correlation-length scale. The free fit follows the plateau, the transport roll-off, and the high-frequency floor with a small log-RMSE, whereas the forced curve is visibly too stiff through the knee. The gap between the two curves is the quantitative statement that a metrology correlation length and the screened transport coordinate \(\lambda\) are distinct descriptors: they agree only once PSD convention, roughness exponent, blur, and floor are pinned. A residual bootstrap \emph{on these reconstructed points} places the free \(\lambda\) near \(10\,\mathrm{nm}\) with a tight interval; this is an internal consistency statement about the reconstructed pipeline, not independent evidence for a physical transport length.}
\label{fig:measured-psd-fit}
\end{figure}

\subsection{Published 18-nm half-pitch scale (scale anchoring)}

This subsection sets the audit's scale: it ties the magnitudes (band edge, correlation length, averaging) to a reported 18-nm half-pitch measurement so the computed exponents refer to realistic numbers. Calibrated from \cite{mack2017level} (edge \(2400\,\mathrm{nm}\), \(\Delta y=5\,\mathrm{nm}\), correlation length \(7.2\,\mathrm{nm}\)): \cref{tab:published-scale} uses \(\lambda=7.2\,\mathrm{nm}\), \(K_{\rm Nyq}=\pi/\Delta y=0.628\,\mathrm{nm}^{-1}\), \(m=16\), \(\Delta\lambda/\lambda=5\%\), and the local Chernoff after Schur elimination of amplitude and blur.

\begin{table}[htbp]
\centering
\small
\renewcommand{\arraystretch}{1.15}
\caption{\textbf{Calibrated 18-nm half-pitch release audit.} Anchored to the reported public measurement (correlation length $7.2\,\mathrm{nm}$ used as a transport-scale proxy, edge length $2400\,\mathrm{nm}$ giving bin spacing $\Delta k=2\pi/2400\,\mathrm{nm}^{-1}$, sampling $\Delta y=5\,\mathrm{nm}$ giving Nyquist edge $\pi/\Delta y=0.628\,\mathrm{nm}^{-1}$), with $m=16$ effective averages and a protected separation $\Delta\lambda/\lambda=5\%$. The per-spectrum exponent is the local quadratic Chernoff $C=\tfrac18(\Delta\lambda)^2\,\mathcal{I}_{\lambda\mid\alpha,\beta}$ from the discrete Schur complement after amplitude and blur elimination; $L_{0.9}=\lceil\log(10)/C\rceil$.}
\label{tab:published-scale}
\setlength{\tabcolsep}{6pt}
\begin{tabular}{c c r c r l}
\toprule
{$K$ (nm$^{-1}$)} & {$K\lambda$} & {bins} & {$C$ per spectrum} & {$L_{0.9}$} & {severity} \\
\midrule
0.050 & 0.36 & 19 & $1.13\times10^{-6}$ & 2\,032\,914 & low \\
0.100 & 0.72 & 38 & $2.29\times10^{-4}$ & 10\,063 & low \\
0.200 & 1.44 & 76 & $1.30\times10^{-2}$ & 178 & moderate \\
0.400 & 2.88 & 152 & $1.48\times10^{-1}$ & 16 & high \\
0.628 & 4.52 & 240 & $3.76\times10^{-1}$ & 7 & high \\
\bottomrule
\end{tabular}
\par\vspace{3pt}
\begin{minipage}{0.92\linewidth}\footnotesize The same physical sample moves from \emph{low} to \emph{high} release severity as the band edge crosses the transport knee near $K\lambda\sim1$: a low-pass release below $K=0.1\,\mathrm{nm}^{-1}$ hides the protected bit behind $\sim\!10^{4}$ spectra, whereas the full Nyquist band exposes it in $\sim\!7$.\end{minipage}
\end{table}

\section{Synthetic gamma-PSD audit (model-internal evidence)}

Seed \(20260531\), 10{,}000 trials per cell with \(L_{0.9}\le5000\), Wilson 95\% intervals; \(n=128\) midpoint bins, \(m=16\), \(A=1\), \(S_0=0\), \(\beta=0.02\), \(\lambda=1\); \(\lambda_1=\lambda(1+\Delta)\), nuisance-matched by projecting the log \(\lambda\)-shift onto \(\operatorname{span}\{1,k^2\}\); exponent is the exact gamma Chernoff \eqref{eq:chernoff}.

\begin{table}[htbp]
\centering
\small
\renewcommand{\arraystretch}{1.15}
\caption{\textbf{Binary protected-bit challenge under the diagonal gamma channel.} For each released band $K\lambda$ and protected transport separation $\Delta\lambda/\lambda$, the exact Chernoff information $C$ of \eqref{eq:chernoff} is evaluated on the nuisance-matched pair (the log $\lambda$-shift projected off $\operatorname{span}\{1,k^2\}$, so amplitude and Gaussian blur give the adversary no help), and $L_{0.9}=\lceil\log(10)/C\rceil$ is the number of independent released spectra at which the optimal likelihood-ratio test first reaches $90\%$ success. The final column is the measured maximum-likelihood success over 1,000 seeded Monte-Carlo challenges with its Wilson $95\%$ interval, reported where $L_{0.9}\le5000$.}
\label{tab:mc-binary}
\setlength{\tabcolsep}{6pt}
\begin{tabular}{c c c r l}
\toprule
{$K\lambda$} & {$\Delta\lambda/\lambda$ (\%)} & {Chernoff $C$} & {$L_{0.9}$} & {ML success, 95\% CI} \\
\midrule
0.30 & 10 & $9.09\times10^{-6}$ & 253\,222 & -- \\
1.00 & 5 & $4.88\times10^{-3}$ & 473 & 0.993~{\footnotesize [0.986, 0.997]} \\
1.00 & 10 & $2.09\times10^{-2}$ & 111 & 0.984~{\footnotesize [0.974, 0.990]} \\
3.00 & 1 & $5.36\times10^{-3}$ & 430 & 0.986~{\footnotesize [0.977, 0.992]} \\
3.00 & 5 & $1.32\times10^{-1}$ & 18 & 0.984~{\footnotesize [0.974, 0.990]} \\
\bottomrule
\end{tabular}
\par\vspace{3pt}
\begin{minipage}{0.92\linewidth}\footnotesize Reading across a row, a wider band or a larger separation raises $C$ and collapses $L_{0.9}$; the Monte-Carlo column confirms the predicted operating point to within sampling error. The low-band cell ($K\lambda=0.30$) requires more than $2\times10^{5}$ spectra and is therefore quoted by its exponent alone.\end{minipage}
\end{table}

The RMS-only ablation releases the scalar \(R=\sum_i a_i\widehat S_i\), scored with the optimal Gaussian likelihood using \(\mu_\theta=\sum_ia_iS_i\), \(\sigma_\theta^2=\sum_ia_i^2S_i^2/m_i\). Intuitively this collapses the whole band to a single variance functional and discards the spectral \emph{shape} (plateau height, knee position, roll-off) where the transport coordinate lives, so two recipes matched in total released variance but differing in \(\lambda\) are nearly invisible to an RMS release; this is why the PSD-vs-RMS gap below is large.
\begin{equation}
\mu_\theta=\sum_i a_iS_i(\theta),\qquad \sigma_\theta^2=\sum_i a_i^2\frac{S_i(\theta)^2}{m_i}.
\label{eq:rms-gaussian}
\end{equation}

\begin{table}[htbp]
\centering
\small
\renewcommand{\arraystretch}{1.15}
\caption{\textbf{Full-band PSD versus RMS-only release on identical challenge samples.} Both columns score the same seeded gamma draws: the full-PSD column applies the optimal bin-wise likelihood-ratio test, while the RMS-only column releases the scalar $R=\sum_i\widehat S_i$ and applies the optimal Gaussian test using its exact first two moments \eqref{eq:rms-gaussian} (1,000 trials, Wilson $95\%$ CI). Because the protected transport coordinate lives in spectral \emph{shape} rather than total power, collapsing the band to one scalar destroys most of the attribution signal.}
\label{tab:rms-ablation}
\setlength{\tabcolsep}{6pt}
\begin{tabular}{c c r c c}
\toprule
{$K\lambda$} & {$\Delta\lambda/\lambda$ (\%)} & {$L$} & {full PSD ML} & {RMS-only ML} \\
\midrule
1.00 & 5 & 465 & 0.986~{\footnotesize [0.977, 0.992]} & 0.554~{\footnotesize [0.523, 0.585]} \\
1.00 & 10 & 109 & 0.986~{\footnotesize [0.977, 0.992]} & 0.550~{\footnotesize [0.519, 0.581]} \\
3.00 & 1 & 433 & 0.989~{\footnotesize [0.980, 0.994]} & 0.842~{\footnotesize [0.818, 0.863]} \\
3.00 & 5 & 18 & 0.984~{\footnotesize [0.974, 0.990]} & 0.814~{\footnotesize [0.789, 0.837]} \\
\bottomrule
\end{tabular}
\par\vspace{3pt}
\begin{minipage}{0.92\linewidth}\footnotesize At $K\lambda=1$ the RMS test sits near chance ($\approx0.55$) while the PSD test is essentially certain; only deep past the knee ($K\lambda=3$) does the scalar recover usable power, and even there the full PSD stays strictly stronger.\end{minipage}
\end{table}

Library thresholds use \(L\ge\log(N/\eta)/C\) at \(\eta=0.05\).

\begin{table}[htbp]
\centering
\small
\renewcommand{\arraystretch}{1.15}
\caption{\textbf{Finite-library attribution thresholds.} Number of independent released spectra $L$ at which exhaustive maximum-likelihood attribution against a stored codebook of $N$ recipes keeps the union-bound failure probability below $\eta=0.05$, i.e.\ $L\ge\log(N/\eta)/C_{\min}$ (\cref{thm:binary}). The same exact gamma Chernoff exponent $C$ that governs the binary challenge sets the scale, so the cost of enlarging the library grows only logarithmically while the released bandwidth enters through $C$.}
\label{tab:library}
\setlength{\tabcolsep}{8pt}
\begin{tabular}{c c r r r}
\toprule
{$K\lambda$} & {$\Delta\lambda/\lambda$ (\%)} & {$N=2^{10}$} & {$N=2^{15}$} & {$N=2^{20}$} \\
\midrule
0.30 & 5 & 4\,999\,126 & 6\,744\,395 & 8\,489\,665 \\
1.00 & 5 & 2\,036 & 2\,746 & 3\,457 \\
3.00 & 5 & 75 & 102 & 128 \\
3.00 & 10 & 20 & 26 & 33 \\
\bottomrule
\end{tabular}
\par\vspace{3pt}
\begin{minipage}{0.92\linewidth}\footnotesize Multiplying the library by $32$ (from $2^{10}$ to $2^{15}$) adds only $\log 32/C$ spectra: a constant offset, not a multiplicative one. Bandwidth, by contrast, changes $C$ by orders of magnitude and dominates the threshold.\end{minipage}
\end{table}

The floor-mismatch audit generates with an unmodeled floor \(S_0=\rho A\) while scoring no-floor templates.

\begin{table}[htbp]
\centering
\small
\renewcommand{\arraystretch}{1.15}
\caption{\textbf{Model-mismatch sensitivity of the binary test.} Challenge spectra are \emph{generated} with an unmodeled additive metrology floor $S_0=\rho A$, but \emph{scored} against the floor-free templates, so the test operates under a controlled misspecification at fixed $K\lambda=1$, $\Delta\lambda/\lambda=10\%$, and $L=109$ (1,000 trials, Wilson $95\%$ CI). This is the empirical face of the positive-margin guarantee (\cref{prop:bounds}): a bounded mismatch only erodes, but does not reverse, the likelihood margin.}
\label{tab:mismatch}
\setlength{\tabcolsep}{6pt}
\begin{tabular}{c c r c}
\toprule
{floor $\rho=S_0/A$} & {$K\lambda$} & {$L$} & {ML success, 95\% CI} \\
\midrule
0.00 & 1.00 & 109 & 0.986~{\footnotesize [0.977, 0.992]} \\
0.01 & 1.00 & 109 & 0.979~{\footnotesize [0.968, 0.986]} \\
0.03 & 1.00 & 109 & 0.971~{\footnotesize [0.959, 0.980]} \\
0.05 & 1.00 & 109 & 0.960~{\footnotesize [0.946, 0.970]} \\
\bottomrule
\end{tabular}
\par\vspace{3pt}
\begin{minipage}{0.92\linewidth}\footnotesize Success degrades smoothly and gently across the tested range: a few-percent unmodeled floor costs only about one to two points of success rather than collapsing the test, exactly the bounded-margin behaviour \cref{prop:bounds} predicts.\end{minipage}
\end{table}

\section{Model-internal numerical checks and additional computational evidence}
\label{sec:validation}

\subsection{Transport-knee numerical check}

\cref{fig:transport-knee} compares the exact flat-weight Schur complement (backward-stable QR projection) with \eqref{eq:k9}: the ratio \(\to1\) and log-log slope \(\to9\) as \(K\lambda\to0\), numerically checking consistency with the asymptotic \cref{thm:k9} and the corrected prefactor of \cref{rem:correction}.

\begin{figure}[htbp]
\centering
\centering
\begin{tikzpicture}
\begin{groupplot}[
    group style={group size=2 by 1, horizontal sep=1.6cm},
    width=0.46\linewidth, height=5.2cm, euvaxis,
]
\nextgroupplot[
    xmode=log, ymode=log,
    xlabel={$K\lambda$}, ylabel={$\mathcal{I}_{\lambda\mid\alpha,\beta}(K)/w$},
    title={\footnotesize transport-knee leakage law},
    legend pos=north west, legend cell align=left,
]
\addplot[euvA, thick] table[x=Klam, y=exact] {figures/data/transport_knee.dat};
\addlegendentry{exact Schur complement}
\addplot[euvB, dashed, thick] table[x=Klam, y=asym] {figures/data/transport_knee.dat};
\addlegendentry{$(64/1225)\,w\lambda^6K^9$}
\nextgroupplot[
    xmode=log,
    xlabel={$K\lambda$}, ylabel={exact / asymptotic},
    title={\footnotesize low-band agreement},
    ymin=0.0, ymax=1.15,
]
\addplot[euvC, thick] table[x=Klam, y=ratio] {figures/data/transport_knee.dat};
\draw[gray, densely dotted] ({rel axis cs:0,0}|-{axis cs:0.02,1}) -- ({rel axis cs:1,0}|-{axis cs:0.02,1});
\end{groupplot}
\end{tikzpicture}
\caption{\textbf{Model-internal numerical check of the ninth-order transport-knee law.} Left: the exact finite-band conditional information \(\calI_{\lambda\mid\alpha,\beta}(K)/w\) (solid), computed as a backward-stable QR projection of \(g_\lambda\) off \(\operatorname{span}\{1,k^2\}\), against the closed-form law \((64/1225)\lambda^6K^9\) (dashed) of \cref{thm:k9}. The two coincide over more than four decades below the knee and separate only as \(K\lambda\to1\), where the neglected \(O(K^{11})\) curvature enters. Right: their ratio, which approaches unity as \(K\lambda\to0\); the fitted low-band log--log slope is \(\approx 8.98\), numerically confirming the predicted exponent \(9\). This figure is a consistency check for the implementation of the asymptotic calculation and the corrected prefactor \(I_{\lambda\lambda}=9w/\lambda^3(\cdots)\) of \cref{rem:correction}: a wrong power of \(\lambda\) would shift the curve vertically and break the ratio. It is generated from the same model assumptions and is not independent empirical evidence.}
\label{fig:transport-knee}
\end{figure}

\subsection{Finite-training penalty}

\cref{tab:finite-training} and \cref{fig:finite-training} quantify how plug-in templates from \(Q\) training releases fall below the ideal prediction, instantiating \cref{prop:bounds}(ii).

\begin{table}[htbp]
\centering
\small
\renewcommand{\arraystretch}{1.15}
\caption{\textbf{Finite-training penalty of plug-in templates.} At $L=109$, $K\lambda=1$, $\Delta\lambda/\lambda=10\%$ the adversary no longer knows the challenge spectra exactly but estimates each template mean from $Q$ training releases (the maximum-likelihood gamma-mean estimator) and scores with the plug-in likelihood. The table contrasts this realistic success with the ideal-template success on the same challenges; their gap is the price of finite profiling and is the empirical counterpart of the $O_p(\sqrt{L/Q})$ score perturbation in \cref{prop:bounds}. Success is over $2000$ Monte-Carlo challenges per row and the bracket is the Wilson $95\%$ interval.}
\label{tab:finite-training}
\setlength{\tabcolsep}{8pt}
\begin{tabular}{r c c c c}
\toprule
{$Q$ per template} & {plug-in success} & {plug-in 95\% CI} & {ideal success} & {penalty} \\
\midrule
5 & 0.527 & $[0.505,\,0.549]$ & 0.983 & 0.455 \\
10 & 0.497 & $[0.475,\,0.519]$ & 0.983 & 0.485 \\
20 & 0.495 & $[0.473,\,0.517]$ & 0.985 & 0.490 \\
50 & 0.645 & $[0.624,\,0.666]$ & 0.988 & 0.343 \\
100 & 0.693 & $[0.672,\,0.713]$ & 0.988 & 0.295 \\
500 & 0.850 & $[0.834,\,0.865]$ & 0.980 & 0.130 \\
\bottomrule
\end{tabular}
\par\vspace{3pt}
\begin{minipage}{0.92\linewidth}\footnotesize The penalty decays as the training budget grows, from nearly $0.5$ at $Q=5$ to about $0.13$ at $Q=500$. At $Q\le20$ the plug-in interval still covers $\tfrac12$, so a sparsely profiled adversary is statistically at chance, and the small non-monotonicity near $Q=5$ is Monte-Carlo scatter; only for $Q\gtrsim50$ does success rise decisively above chance, still far below the information-theoretic optimum the released channel would otherwise permit.\end{minipage}
\end{table}

\begin{figure}[htbp]
\centering
\centering
\begin{tikzpicture}
\begin{axis}[
    euvaxis, width=0.68\linewidth, height=6.0cm,
    xmode=log, log basis x=10,
    xlabel={training budget $Q$ per template},
    ylabel={ideal $-$ plug-in success},
    ymin=0, legend pos=north east, legend cell align=left,
    title={\footnotesize finite-training penalty},
]
\addplot[euvA, mark=*, thick]      table[x=Q, y=L50]  {figures/data/finite_training.dat};
\addlegendentry{$L=50$}
\addplot[euvB, mark=square*, thick] table[x=Q, y=L109] {figures/data/finite_training.dat};
\addlegendentry{$L=109$}
\addplot[euvC, mark=triangle*, thick] table[x=Q, y=L200] {figures/data/finite_training.dat};
\addlegendentry{$L=200$}
\draw[gray, densely dotted] ({rel axis cs:0,0}|-{axis cs:5,0}) -- ({rel axis cs:1,0}|-{axis cs:5,0});
\end{axis}
\end{tikzpicture}
\caption{\textbf{The finite-training penalty closes as profiling improves.} Each curve fixes a challenge length \(L\) and plots the gap between ideal-template success and plug-in success against the per-template training budget \(Q\) (log scale). The penalty is large and \(L\)-dependent when \(Q\) is small (a sparsely profiled adversary is far from the channel optimum) and decays steadily toward zero as \(Q\) grows (up to Monte-Carlo scatter), in agreement with the \(O_p(\sqrt{L/Q})\) score perturbation of \cref{prop:bounds}(ii). Longer challenges (larger \(L\)) start from a worse penalty because the plug-in template error is amplified over more scored releases, but all curves collapse together once \(Q\) is large enough to estimate the templates well. Operationally, the curve tells a defender how much labeled history an attacker must accumulate before the released-channel exponent becomes the binding constraint.}
\label{fig:finite-training}
\end{figure}

\subsection{Correlated-bin covariance ablation}

\cref{tab:covariance-ablation} and \cref{fig:covariance} compare the diagonal-gamma exponent with an AR(1) log-residual covariance of the same per-bin variances: positive correlation reduces distinguishable information, so a diagonal audit over-states leakage. On a measured release the covariance must be estimated from repeats and shrunk when \(r\lesssim n\), following \eqref{eq:cov-kl} and the sample-cost note there; the ablation here instead uses a known AR(1) family to isolate the effect of correlation alone.

\begin{table}[htbp]
\centering
\small
\renewcommand{\arraystretch}{1.15}
\caption{\textbf{Correlated-bin ablation of the leakage exponent.} The protected mean difference is held fixed while only the noise model changes: the diagonal column uses $\Sigma=\operatorname{diag}(\psi_1(m))$ (independent gamma bins) and the AR(1) column uses a Toeplitz log-residual covariance $\Sigma_{ij}=\psi_1(m)\,\rho^{|i-j|}$ with the \emph{same} per-bin variances but correlation $\rho$. Both exponents come from the covariance-weighted local Chernoff $C=\tfrac18\,d^\top\Sigma^{-1}d$ of \eqref{eq:cov-kl}, isolating the effect of correlation alone.}
\label{tab:covariance-ablation}
\setlength{\tabcolsep}{8pt}
\begin{tabular}{c c c c}
\toprule
{$\rho$} & {$C$ (diagonal gamma)} & {$C$ (AR(1) bins)} & {ratio} \\
\midrule
0.0 & $2.36\times10^{-3}$ & $2.36\times10^{-3}$ & 1.000 \\
0.2 & $2.36\times10^{-3}$ & $1.62\times10^{-3}$ & 0.688 \\
0.4 & $2.36\times10^{-3}$ & $1.10\times10^{-3}$ & 0.467 \\
0.6 & $2.36\times10^{-3}$ & $7.19\times10^{-4}$ & 0.304 \\
0.8 & $2.36\times10^{-3}$ & $4.53\times10^{-4}$ & 0.192 \\
\bottomrule
\end{tabular}
\par\vspace{3pt}
\begin{minipage}{0.92\linewidth}\footnotesize Positive correlation pools redundant bins and shrinks the usable information: by $\rho=0.5$ the true exponent is already about $0.38$ of the diagonal value, so an independence-assuming audit \emph{over-states} leakage and should be read as an upper bound on real overlapped-Welch spectra.\end{minipage}
\end{table}

\begin{figure}[htbp]
\centering
\centering
\begin{tikzpicture}
\begin{axis}[
    euvaxis, width=0.66\linewidth, height=5.6cm,
    xlabel={bin correlation $\rho$ (AR(1) log-residuals)},
    ylabel={correlated / diagonal exponent},
    ymin=0, ymax=1.08, xmin=-0.02, xmax=0.92,
    title={\footnotesize diagonal gamma vs correlated-bin leakage},
]
\addplot[euvA, mark=*, thick] table[x=rho, y=ratio] {figures/data/covariance.dat};
\addplot[gray, densely dotted] coordinates {(-0.02,1) (0.92,1)};
\node[anchor=south east, font=\footnotesize, gray] at (axis cs:0.9,1.0) {independence ($=1$)};
\end{axis}
\end{tikzpicture}
\caption{\textbf{Bin correlation makes the diagonal audit conservative.} Ratio of the correlated-bin leakage exponent to the diagonal-gamma exponent as a function of the AR(1) log-residual correlation \(\rho\), with the protected mean difference and the per-bin variances held fixed. At \(\rho=0\) the ratio is one by construction; as \(\rho\) grows, neighbouring bins carry redundant information, the covariance-weighted Chernoff \(\tfrac18 d^\top\Sigma^{-1}d\) shrinks, and the ratio falls below one, reaching \(\approx0.38\) at \(\rho=0.5\) and continuing downward. The practical reading, \emph{restricted to the tested positive AR(1) covariance family}, is that the diagonal-gamma assumption stays conservative on these cases; real overlapped-Welch spectra call for substituting the measured covariance \(\widehat\Sigma\) for \(\operatorname{diag}(m_i)\).}
\label{fig:covariance}
\end{figure}

\subsection{Floor/blur/transport coupling map}

\cref{tab:coupling-diagnostics} and \cref{fig:coupling} map the floor-aware information and nuisance-block condition number over \((K\lambda,S_0/A)\); larger floor and band edge raise the condition number (\cref{prop:floor-coupling}).

\begin{table}[htbp]
\centering
\small
\renewcommand{\arraystretch}{1.15}
\caption{\textbf{Floor/blur/transport coupling diagnostics.} Representative cells of the $(K\lambda,\,S_0/A)$ grid giving the floor-aware conditional transport information $\mathcal{I}_{\lambda\mid\alpha,\beta,S_0}$ (the Schur complement of $\lambda$ after eliminating amplitude, blur, \emph{and} the metrology floor) together with the condition number of the $3\times3$ nuisance block. The condition number is a separability gauge: when it is large, the released band cannot cleanly distinguish a transport change from a joint amplitude/blur/floor adjustment (\cref{prop:floor-coupling}).}
\label{tab:coupling-diagnostics}
\setlength{\tabcolsep}{8pt}
\begin{tabular}{c c c c}
\toprule
{$K\lambda$} & {$S_0/A$} & {$\mathcal{I}_{\lambda\mid\alpha,\beta,S_0}$} & {cond(nuisance)} \\
\midrule
0.30 & 0.00 & $2.87\times10^{-10}$ & $1.53\times10^{8}$ \\
0.30 & 0.20 & $1.95\times10^{-10}$ & $1.54\times10^{8}$ \\
0.30 & 0.40 & $1.41\times10^{-10}$ & $1.54\times10^{8}$ \\
3.00 & 0.00 & $1.58\times10^{-1}$ & $1.51\times10^{3}$ \\
3.00 & 0.20 & $3.66\times10^{-2}$ & $4.72\times10^{2}$ \\
3.00 & 0.40 & $1.83\times10^{-2}$ & $4.05\times10^{2}$ \\
\bottomrule
\end{tabular}
\par\vspace{3pt}
\begin{minipage}{0.92\linewidth}\footnotesize Information rises steeply with band edge, but so does the conditioning: a floor-blind audit is trustworthy only in the well-conditioned corner (small $S_0/A$, modest $K\lambda$); elsewhere the floor tangent must be carried explicitly.\end{minipage}
\end{table}

\begin{figure}[htbp]
\centering
\input{figures/data/coupling_meta.tex}
\centering
\begin{tikzpicture}
\begin{groupplot}[
    group style={group size=2 by 1, horizontal sep=3.5cm},
    width=0.42\linewidth, height=5.2cm, euvaxis,
    enlargelimits=false, axis on top,
    xlabel={$K\lambda$}, ylabel={$S_0/A$},
    colorbar, colorbar style={width=0.16cm, font=\scriptsize},
    point meta min=-12,
]
\nextgroupplot[title={\footnotesize $\log_{10}\,\mathcal{I}_{\lambda\mid\alpha,\beta,S_0}$},
    colormap/viridis, point meta max=0]
\addplot[matrix plot*, mesh/cols=\couplingcols, point meta=explicit]
    table[x=Klam, y=rho, meta=val] {figures/data/coupling_info.dat};
\nextgroupplot[title={\footnotesize $\log_{10}\,\mathrm{cond}$(nuisance block)},
    colormap/hot, point meta min=2, point meta max=8]
\addplot[matrix plot*, mesh/cols=\couplingcols, point meta=explicit]
    table[x=Klam, y=rho, meta=val] {figures/data/coupling_cond.dat};
\end{groupplot}
\end{tikzpicture}
\caption{\textbf{Where a floor-blind audit can be trusted.} Heatmaps over the released band edge \(K\lambda\) (horizontal) and the relative metrology floor \(S_0/A\) (vertical). Left: \(\log_{10}\) of the floor-aware conditional transport information \(\calI_{\lambda\mid\alpha,\beta,S_0}\), the Schur complement of \(\lambda\) after eliminating amplitude, blur, \emph{and} floor; it rises steeply with band edge as the knee enters the window. Right: \(\log_{10}\) of the condition number of the \(3\times3\) nuisance block, a direct separability gauge. The two panels must be read together: information is largest exactly where the conditioning is worst (large \(K\lambda\), large \(S_0/A\)), so raw transport information there is partly an artifact of an ill-posed nuisance subtraction. A floor-blind audit (\cref{prop:floor-coupling}) is reliable only in the well-conditioned corner (small floor, moderate band), and elsewhere the floor tangent \(\partial_{S_0}f\) must be carried explicitly.}
\label{fig:coupling}
\end{figure}

\subsection{Leakage-utility release optimizer}

\cref{fig:release} traces the rank-one optimum of \cref{thm:minimax}: maximum retained utility of a public scalar statistic subject to a protected leakage budget, with the converse.

\begin{figure}[htbp]
\centering
\centering
\begin{tikzpicture}
\begin{axis}[
    euvaxis, width=0.66\linewidth, height=5.6cm,
    xmode=log, log basis x=10,
    xlabel={protected leakage budget $\varepsilon$},
    ylabel={retained utility $|\langle z,u\rangle|$},
    ymin=0, ymax=1.05,
    legend pos=north west, legend cell align=left,
    title={\footnotesize leakage--utility frontier ($\rho=-0.80$)},
]
\addplot[euvA, thick] table[x=eps, y=utility] {figures/data/release_frontier.dat};
\addlegendentry{optimal utility}
\addplot[euvB, dashed, thick] table[x=eps, y=utility] {figures/data/release_converse.dat};
\addlegendentry{converse bound}
\end{axis}
\end{tikzpicture}
\caption{\textbf{The minimum-leakage public statistic under a utility constraint.} For a protected tangent \(p\) (transport) and a utility tangent \(u\) (amplitude) with correlation \(\rho\), the curve gives the maximum retained utility \(|\inner{z}{u}|\) of an optimal rank-one release \(z\) subject to the protected leakage budget \(\inner{z}{p}^2/(2\tau^2)\le\eps\) (\cref{thm:minimax}); the dashed line is the matching converse, the least leakage any release achieving a given utility must incur. At small \(\eps\) the optimal release lives almost entirely in \(p^\perp\) and retains the component of \(u\) orthogonal to \(p\), namely \(\sqrt{1-\rho^2}\); as the budget grows the release is allowed to tilt toward \(u\) until, at \(2\tau^2\eps=\rho^2\), it aligns with \(u\) and retains full utility. The knee of this frontier is the operating point a release owner should target: it is the most useful public scalar whose protected exponent still respects the audit budget.}
\label{fig:release}
\end{figure}

\section{Relation to prior work}

The framework is an instance of quantitative information flow \cite{smith2009,alvim2020}: leakage is measured by the statistical distinguishability a release induces between hidden states, here recipe coordinates rather than secret keys. The closest operational analogue is the profiled, or template, side-channel attack \cite{chari2003,choudary2013}, in which an adversary first estimates per-state templates from labeled traces and then attributes a challenge by likelihood. Our finite-training penalty (\cref{prop:bounds}(ii)) is exactly the template-estimation error of that setting; our covariance-weighted channel \eqref{eq:cov-kl} plays the role of the pooled trace covariance, shrinkage-regularized when repeats are scarce \cite{ledoit2004}; and our positive-margin mismatch bound (\cref{prop:bounds}(iii)) is the profiling-model-mismatch problem made quantitative. Information-theoretic side-channel evaluation \cite{standaert2009} and differential power analysis \cite{kocher1999} supply the leakage-estimation viewpoint, classical testing supplies the KL/TV/Chernoff machinery \cite{cover2006}, and the additive release accounting parallels differential privacy \cite{dwork2006} and the broader question of what a released statistic or model discloses about its training data \cite{shokri2017}, i.e.\ the statistical-disclosure-control problem of bounding what a public release reveals \cite{dinur2003}. We borrow the indistinguishability-experiment \emph{language} of \cite{goldwasser1984}; the contribution is a quantitative leakage audit for a metrology-release channel.

On the lithography side, chemical amplification underlies acid generation and deprotection \cite{ito1983,reichmanis1991,wallraff1999}; Houle and coauthors measured coupled catalysis/diffusion \cite{houle2000,houle2004}; acid diffusion is a performance variable \cite{vansteenwinckel2005}. Gallatin modeled blur and LER \cite{gallatin2005,gallatin2008}; Naulleau and De Bisschop framed stochastic patterning and failures \cite{naulleau2011,deBisschop2017}; Mack treated shot noise and metrics \cite{mack2018shot,mack2019metrics}. PSD analysis is standard \cite{constantoudis2004,lorusso2018,cutler2021,palasantzas1993,wallow2008}.

\section{Scope and physical limitations of the EUV instantiation}
\label{sec:limitations}

The audit calculus (the gamma and covariance-weighted channels, the KL/Chernoff and protected-bit bounds, the finite-band geometry, the release-map optimizer, and the protocol of \cref{sec:protocol}) is the contribution, and it is model-agnostic: it applies to any declared released-PSD channel. The screened spectrum \eqref{eq:screened-psd} instantiates that calculus for the EUV edge regime.

\emph{The screened form models the acid-diffusion edge regime.} The factor \((1+\lambda^2k^2)^{-3/2}\) is derived in \cref{app:green} from a screened reaction--diffusion edge field and carries a transport \emph{knee} near \(k\sim1/\lambda\) characteristic of that regime. The residual gate~P3 of \cref{sec:protocol} confirms the form on each release before any exponent is quoted, so the audit applies it where it fits and defers elsewhere.

\emph{The protected coordinate is effective.} A single released spectrum identifies only the combination \(\mu=\lambda^{-2}=(k_qq_0+k_{\rm loss})/D_H\) (\cref{sec:transport}); separating quencher loading \(q_0\) from the loss rate requires a controlled sweep. Throughout, \(\lambda\) is an \emph{effective} transport coordinate, and the leakage statements concern \(\lambda\) (equivalently \(\mu\)), not an individually resolved chemical rate.

\emph{The numbers are model-conditioned, and the calculus is ready for measured deployment.} The reconstructed-PSD fit is a pipeline smoke test, the published 18-nm figures set the scale, and the gamma-channel audits are model-internal Monte Carlo that exercise the calculus end to end. The next step is to run the protocol of \cref{sec:protocol} on repeated measured releases, estimating the effective degrees of freedom and covariance, testing the gamma and screened assumptions, and reporting the safe band, which is the deployment the calculus is built for.

\section{Conclusion}

A released roughness spectrum is a statistical transcript of a recipe: once release map, estimator, and budgets are fixed, leakage is a likelihood exponent, exact for gamma PSDs and the Fisher norm of the released tangent locally. For acid-quencher transport this exponent obeys \(\calI_{\lambda\mid\alpha,\beta}(K)=\tfrac{64}{1225}w\lambda^6K^9+O(w\lambda^8K^{11})\), giving a closed-form safe-band edge and attribution thresholds; the release-map geometry places protected tangents in the nullspace and keeps utility inside it; and a fixed-seed package together with the deployment protocol of \cref{sec:protocol} turns the calculus into numbers a release owner can act on. We instantiate everything on screened EUV roughness spectra as a model-conditioned case study, and running the same protocol on repeated measured releases is the natural next step, which it is written to support unchanged.

\appendix
\renewcommand{\thesection}{\Alph{section}}
\setcounter{section}{0}

\section*{Appendices: worked derivations}
The appendices collect the full computations behind the results quoted in the main text. They are deliberately explicit (each integral is carried out, each Gram matrix is inverted, each optimization is solved by hand), so that the package's unit tests can be read against the algebra line by line.

\section{Screened Green-function reduction}
\label{app:green}

We derive the screened factor \((1+\lambda^2k^2)^{-3/2}\) of \eqref{eq:screened-psd} from a two-dimensional reaction--diffusion field. Let \(x\) run along the edge and \(y\) normal to it. Linearized acid-concentration fluctuations \(a(x,y)\) obey a screened diffusion equation driven by a white exposure source \(\eta\),
\begin{equation}
\big(-D_H\nabla^2+D_H\lambda^{-2}\big)\,a(x,y)=\eta(x,y),
\qquad \E[\eta(\mathbf r)\eta(\mathbf r')]=\Gamma\,\delta(\mathbf r-\mathbf r'),
\label{eq:appA-pde}
\end{equation}
so the Green function \(G\) solving \((-D_H\nabla^2+D_H\lambda^{-2})G=\delta\) has the two-dimensional Fourier transform
\begin{equation}
\widehat G(k,q)=\frac{1}{D_H\,(k^2+q^2+\lambda^{-2})},
\label{eq:appA-green}
\end{equation}
with \(k,q\) conjugate to \(x,y\). The field PSD is \(S_a(k,q)=\Gamma\,|\widehat G(k,q)|^2\). The edge is the level set of \(a\); to first order its displacement is \(u(x)=-a(x,0)/G_e\) with \(G_e\) the mean threshold gradient, so the one-dimensional edge PSD is the marginal of \(S_a\) over the unresolved normal wavenumber:
\begin{equation}
S_u(k)=\frac{1}{G_e^2}\int_{-\infty}^{\infty}\frac{\dd q}{2\pi}\,S_a(k,q)
=\frac{\Gamma}{G_e^2D_H^2}\int_{-\infty}^{\infty}\frac{\dd q}{2\pi}\,\frac{1}{(k^2+q^2+\lambda^{-2})^2}.
\label{eq:appA-marginal}
\end{equation}
Write \(\kappa^2=k^2+\lambda^{-2}\). The standard integral
\begin{equation}
\int_{-\infty}^{\infty}\frac{\dd q}{(q^2+\kappa^2)^2}=\frac{\pi}{2\kappa^3}
\label{eq:appA-int}
\end{equation}
follows by residues (a double pole at \(q=i\kappa\)) or by differentiating \(\int(q^2+\kappa^2)^{-1}\dd q=\pi/\kappa\) with respect to \(\kappa^2\). Hence
\begin{equation}
S_u(k)=\frac{\Gamma}{G_e^2D_H^2}\cdot\frac{1}{2\pi}\cdot\frac{\pi}{2(k^2+\lambda^{-2})^{3/2}}
=\frac{\Gamma}{4G_e^2D_H^2}\,(k^2+\lambda^{-2})^{-3/2}.
\label{eq:appA-pre}
\end{equation}
Factoring \(\lambda^{-2}\) out of the bracket, \((k^2+\lambda^{-2})^{-3/2}=\lambda^{3}(1+\lambda^2k^2)^{-3/2}\), gives
\begin{equation}
S_u(k)=A_0\,(1+\lambda^2k^2)^{-3/2},\qquad A_0=\frac{\Gamma\lambda^3}{4G_e^2D_H^2}.
\label{eq:appA-final}
\end{equation}
Multiplying by the aggregate Gaussian blur \(e^{-\beta k^2}\) (optical, secondary-electron, development, and metrology) and adding a white metrology floor \(S_0\) yields \eqref{eq:screened-psd}. Two remarks. First, \(A_0\propto\lambda^3\): a longer screening length both narrows the spectrum and raises its plateau, which is why amplitude and \(\lambda\) are entangled until the curvature term is isolated. Second, the exponent \(3/2\) is specific to the 2-D-to-1-D marginalization in \eqref{eq:appA-marginal}; a different roughness exponent would replace the screened factor and rescale every Fisher entry below.

\section{Exact finite-band Fisher entries}
\label{app:fisher}

We derive \eqref{eq:fisher-basic}--\eqref{eq:fisher-transport}. Throughout \(w\) is the flat averaging density, \(X=K\lambda\), and the inner product is \(\inner{g}{h}=w\int_0^K g(k)h(k)\,\dd k\). The floor-free log-spectrum is \(f=\alpha-\beta k^2-\tfrac32\log(1+\lambda^2k^2)\), giving the tangents
\begin{equation}
\partial_\alpha f=1,\qquad \partial_\beta f=-k^2,\qquad
\partial_\lambda f=g_\lambda(k)=-\frac{3\lambda k^2}{1+\lambda^2k^2}.
\label{eq:appB-tangents}
\end{equation}

\paragraph{Amplitude--blur block.} These are elementary monomial integrals:
\begin{align}
I_{\alpha\alpha}&=w\!\int_0^K 1\,\dd k=wK, &
I_{\alpha\beta}&=w\!\int_0^K(-k^2)\,\dd k=-\tfrac{wK^3}{3}, &
I_{\beta\beta}&=w\!\int_0^K k^4\,\dd k=\tfrac{wK^5}{5}.
\label{eq:appB-block}
\end{align}

\paragraph{Cross terms.} Substitute \(u=\lambda k\), \(\dd k=\dd u/\lambda\), upper limit \(X=K\lambda\). For \(I_{\alpha\lambda}\),
\begin{equation}
I_{\alpha\lambda}=w\!\int_0^K\!\!\Big(\!-\frac{3\lambda k^2}{1+\lambda^2k^2}\Big)\dd k
=-3w\lambda\!\int_0^K\!\frac{k^2}{1+\lambda^2k^2}\dd k
=-\frac{3w}{\lambda^2}\!\int_0^X\!\frac{u^2}{1+u^2}\dd u.
\label{eq:appB-al1}
\end{equation}
Since \(u^2/(1+u^2)=1-1/(1+u^2)\) and \(\int_0^X(1+u^2)^{-1}\dd u=\arctan X\),
\begin{equation}
I_{\alpha\lambda}=-\frac{3w}{\lambda^2}\big(X-\arctan X\big).
\label{eq:appB-al2}
\end{equation}
For \(I_{\beta\lambda}\), the integrand is \((-k^2)\,g_\lambda=3\lambda k^4/(1+\lambda^2k^2)\):
\begin{equation}
I_{\beta\lambda}=3w\lambda\!\int_0^K\!\frac{k^4}{1+\lambda^2k^2}\dd k
=\frac{3w}{\lambda^4}\!\int_0^X\!\frac{u^4}{1+u^2}\dd u.
\label{eq:appB-bl1}
\end{equation}
Polynomial division gives \(u^4/(1+u^2)=u^2-1+1/(1+u^2)\), so \(\int_0^X=\tfrac{X^3}{3}-X+\arctan X\) and
\begin{equation}
I_{\beta\lambda}=\frac{3w}{\lambda^4}\Big(\frac{X^3}{3}-X+\arctan X\Big).
\label{eq:appB-bl2}
\end{equation}

\paragraph{Transport diagonal (the corrected entry).} Here \(g_\lambda^2=9\lambda^2k^4/(1+\lambda^2k^2)^2\):
\begin{equation}
I_{\lambda\lambda}=9w\lambda^2\!\int_0^K\!\frac{k^4}{(1+\lambda^2k^2)^2}\dd k
=\frac{9w}{\lambda^3}\!\int_0^X\!\frac{u^4}{(1+u^2)^2}\dd u.
\label{eq:appB-ll1}
\end{equation}
The explicit \(\lambda\) scaling is \(9\lambda^2\cdot\lambda^{-5}=9\lambda^{-3}\): \emph{five} inverse powers from \(\dd k\) and \(u^4\), two positive from the prefactor. To evaluate the \(u\)-integral, write \(u^4=(u^2+1)^2-2(u^2+1)+1\), so
\begin{equation}
\frac{u^4}{(1+u^2)^2}=1-\frac{2}{1+u^2}+\frac{1}{(1+u^2)^2}.
\label{eq:appB-split}
\end{equation}
Using \(\int_0^X(1+u^2)^{-1}\dd u=\arctan X\) and \(\int_0^X(1+u^2)^{-2}\dd u=\tfrac12\big(\arctan X+\tfrac{X}{1+X^2}\big)\),
\begin{equation}
\int_0^X\frac{u^4}{(1+u^2)^2}\dd u
=X-2\arctan X+\tfrac12\arctan X+\tfrac{X}{2(1+X^2)}
=X-\tfrac32\arctan X+\tfrac{X}{2(1+X^2)},
\label{eq:appB-ll2}
\end{equation}
which gives \eqref{eq:fisher-transport}. As a check, expand for small \(X\): \(\arctan X=X-\tfrac{X^3}{3}+\tfrac{X^5}{5}-\cdots\) and \(\tfrac{X}{2(1+X^2)}=\tfrac{X}{2}(1-X^2+X^4-\cdots)\). Collecting orders, the \(X\) and \(X^3\) terms cancel and the leading survivor is \(\tfrac15 X^5\), so \(I_{\lambda\lambda}\to(9w/\lambda^3)\cdot\tfrac15 X^5=\tfrac95 w\lambda^2K^5\), the value pinned by \texttt{test\_fisher\_transport\_knee}. An entry \(9w/\lambda\) would violate this limit (which must equal \(\int_0^K w(3\lambda k^2)^2\dd k=\tfrac95 w\lambda^2K^5\)), and is the typo corrected in \cref{rem:correction}.

\section{The ninth-order residual computation}
\label{app:k9}

We prove \eqref{eq:k9}. The conditional information is the squared \(L^2_w\)-norm of \(g_\lambda\) after removing its projection onto \(\calB=\operatorname{span}\{1,k^2\}\). Expand \(g_\lambda\) at small \(\lambda\):
\begin{equation}
g_\lambda(k)=-3\lambda k^2\big(1+\lambda^2k^2\big)^{-1}
=-3\lambda k^2+3\lambda^3k^4-3\lambda^5k^6+\cdots
\label{eq:appC-expand}
\end{equation}
The constant-in-shape pieces \(-3\lambda k^2\) lie \emph{entirely} in \(\calB\), so they are annihilated by \(I-\Pi_\calB\). The first surviving term is \(3\lambda^3k^4\); thus
\begin{equation}
\calI_{\lambda\mid\alpha,\beta}(K)=(3\lambda^3)^2\,\big\|(I-\Pi_\calB)\,k^4\big\|_{w,K}^2+O(\lambda^8K^{11}),
\label{eq:appC-reduce}
\end{equation}
and it remains to compute the residual of \(k^4\) against \(\{1,k^2\}\) on \([0,K]\). Seek \(k^4=c_0+c_2k^2+r_K(k)\) with \(r_K\perp_w\{1,k^2\}\). With the moment shorthand \(\mu_j=\int_0^K k^j\,\dd k=K^{j+1}/(j+1)\), the normal equations are
\begin{equation}
\begin{pmatrix}\mu_0&\mu_2\\ \mu_2&\mu_4\end{pmatrix}\!\begin{pmatrix}c_0\\ c_2\end{pmatrix}
=\begin{pmatrix}\mu_4\\ \mu_6\end{pmatrix},
\qquad
\begin{pmatrix}\mu_0&\mu_2\\ \mu_2&\mu_4\end{pmatrix}
=\begin{pmatrix}K&\tfrac{K^3}{3}\\[2pt]\tfrac{K^3}{3}&\tfrac{K^5}{5}\end{pmatrix},
\quad
\begin{pmatrix}\mu_4\\ \mu_6\end{pmatrix}
=\begin{pmatrix}\tfrac{K^5}{5}\\[2pt]\tfrac{K^7}{7}\end{pmatrix}.
\label{eq:appC-normal}
\end{equation}
The Gram determinant is \(\mu_0\mu_4-\mu_2^2=\tfrac{K^6}{5}-\tfrac{K^6}{9}=\tfrac{4K^6}{45}\). Cramer's rule gives
\begin{equation}
c_0=\frac{\mu_4\mu_4-\mu_2\mu_6}{\det}
=\frac{\tfrac{K^{10}}{25}-\tfrac{K^{10}}{21}}{\tfrac{4K^6}{45}}
=\frac{-\tfrac{4}{525}K^{10}}{\tfrac{4}{45}K^6}=-\frac{3}{35}K^4,
\label{eq:appC-c0}
\end{equation}
\begin{equation}
c_2=\frac{\mu_0\mu_6-\mu_2\mu_4}{\det}
=\frac{\tfrac{K^8}{7}-\tfrac{K^8}{15}}{\tfrac{4K^6}{45}}
=\frac{\tfrac{8}{105}K^8}{\tfrac{4}{45}K^6}=\frac{6}{7}K^2,
\label{eq:appC-c2}
\end{equation}
so \(k^4=-\tfrac{3}{35}K^4+\tfrac67K^2k^2+r_K(k)\), matching the residual quoted in the proof of \cref{thm:k9}. By orthogonality the residual norm is
\begin{equation}
\|r_K\|^2=\int_0^K k^8\dd k-c_0\!\int_0^K k^4\dd k-c_2\!\int_0^K k^6\dd k
=\frac{K^9}{9}-c_0\frac{K^5}{5}-c_2\frac{K^7}{7}.
\label{eq:appC-resid}
\end{equation}
Substituting \(c_0,c_2\),
\begin{equation}
\|r_K\|^2=\Big(\frac19+\frac{3}{35}\cdot\frac15-\frac67\cdot\frac17\Big)K^9
=\Big(\frac{1}{9}+\frac{3}{175}-\frac{6}{49}\Big)K^9.
\label{eq:appC-collect}
\end{equation}
Over the common denominator \(11025=9\cdot175\cdot7/(\gcd)\) (explicitly \(11025=3^2\cdot5^2\cdot7^2\)): \(\tfrac19=\tfrac{1225}{11025}\), \(\tfrac{3}{175}=\tfrac{189}{11025}\), \(\tfrac{6}{49}=\tfrac{1350}{11025}\), whence
\begin{equation}
\|r_K\|^2=\frac{1225+189-1350}{11025}K^9=\frac{64}{11025}K^9.
\label{eq:appC-64}
\end{equation}
Finally, restoring the prefactor and the weight,
\begin{equation}
\calI_{\lambda\mid\alpha,\beta}(K)=(3\lambda^3)^2\,w\,\frac{64}{11025}K^9
=\frac{9\cdot64}{11025}\,w\lambda^6K^9=\frac{64}{1225}\,w\lambda^6K^9,
\label{eq:appC-done}
\end{equation}
using \(9/11025=1/1225\). This is \eqref{eq:k9}. The next term comes from the \(-3\lambda^5k^6\) piece of \eqref{eq:appC-expand} together with the cross term against the residual of \(k^4\); both scale as \(\lambda^8K^{11}\), giving the stated remainder. The quencher form of \cref{cor:qbit} follows by the chain rule \(\delta\lambda=-\tfrac12\lambda^3(k_q/D_H)\delta q_0\), squaring the prefactor: \((\tfrac12\lambda^3 k_q/D_H)^2\cdot\tfrac{64}{1225}w\lambda^6K^9=\tfrac{16}{1225}w(k_q^2/D_H^2)\lambda^{12}K^9\).

\section{Gamma-channel divergences}
\label{app:gamma}

Let \(\widehat S\sim\mathrm{Gamma}(m,S/m)\) have density \(p_S(y)=\frac{(m/S)^m}{\Gamma(m)}y^{m-1}e^{-my/S}\). For two means \(S,S'\) write \(r=S'/S\).

\paragraph{KL.} With \(\E_S[\widehat S]=S\) and \(\E_S[\log\widehat S]=\psi(m)+\log(S/m)\),
\begin{equation}
\KL(p_S\Vert p_{S'})=\E_S\Big[\log\frac{p_S}{p_{S'}}\Big]
=\E_S\Big[m\log\frac{S'}{S}+\Big(\frac{m}{S'}-\frac{m}{S}\Big)\widehat S\Big]
=m\Big[\log r+\frac1r-1\Big],
\label{eq:appD-kl}
\end{equation}
since the \(y^{m-1}\) and \(\Gamma(m)\) factors are shared and cancel, and \(\E_S[\widehat S](m/S'-m/S)=m(S/S'-1)=m(1/r-1)\). Summing over independent bins gives \eqref{eq:gamma-kl}. Note \(\log r+1/r-1\ge0\) with equality iff \(r=1\), so KL is nonnegative and vanishes only for identical spectra.

\paragraph{Chernoff.} The \(s\)-tilted (R\'enyi) integral is, for \(0<s<1\),
\begin{equation}
\int_0^\infty p_S(y)^{1-s}p_{S'}(y)^{s}\,\dd y
=\frac{(m/S)^{m(1-s)}(m/S')^{ms}}{\Gamma(m)}\int_0^\infty y^{m-1}e^{-my[(1-s)/S+s/S']}\dd y.
\label{eq:appD-tilt}
\end{equation}
The remaining integral is \(\Gamma(m)\,[m((1-s)/S+s/S')]^{-m}\), so
\begin{equation}
-\log\!\int p_S^{1-s}p_{S'}^{s}
=m\Big[(1-s)\log S+s\log S'+\log\!\Big(\frac{1-s}{S}+\frac{s}{S'}\Big)\Big],
\label{eq:appD-cher}
\end{equation}
the bracketed summand of \eqref{eq:chernoff}; the Chernoff information is its supremum over \(s\in[0,1]\), additive over bins. The objective is concave in \(s\) and vanishes at \(s\in\{0,1\}\); the package maximizes it by a bounded Brent search with a grid fallback (\texttt{chernoff.py}). At \(s=\tfrac12\) one obtains the Bhattacharyya exponent \(m[\tfrac12\log SS'+\log(\tfrac12(1/S+1/S'))]\) used as a cross-check.

\paragraph{Local limit.} For \(S'=S(1+\delta)\) with small \(\delta\), \eqref{eq:appD-kl} expands as \(\KL=\tfrac{m}{2}\delta^2+O(\delta^3)\) and the Chernoff supremum is attained near \(s=\tfrac12\) with value \(\tfrac{m}{8}\delta^2+O(\delta^3)\); hence \(C\approx\KL/4\) locally, the relation used for the covariance-weighted channel and verified in \texttt{test\_chernoff}.

\section{Floor perturbation constant}
\label{app:floor}

We make \cref{prop:floor-coupling} quantitative. Let \(g_\lambda\) be the floor-free transport tangent and \(g_\lambda^F=M_rg_\lambda\), where \(M_r\) is multiplication by \(r(k)=(1+\rho(k))^{-1}\) and \(\rho(k)=S_0/P(k)\le\rho_K\le\tau\). Since \(\|I-M_r\|=\sup_k|1-r(k)|=\sup_k\rho/(1+\rho)\le\tau\), \(M_r\) is a bounded perturbation of the identity. Let \(\Pi\) and \(\Pi_r\) be the orthogonal projections onto \(\calN=\operatorname{span}\{1,k^2\}\) and \(\calN_r=\operatorname{span}\{r,rk^2\}\). The Gram matrix of \(\calN\) is nonsingular on every finite band \(K>0\), so the projection map is locally Lipschitz in the two perturbed basis functions \(r\) and \(rk^2\): \(\|\Pi_r-\Pi\|\le \kappa_{\rm gap}\tau\), where \(\kappa_{\rm gap}\) is controlled by the inverse singular-value gap of the nuisance Gram matrix. Therefore the fixed-floor residual obeys
\begin{equation}
\big|\calI_{\lambda\mid\alpha,\beta}^{(r)}-\calI_{\lambda\mid\alpha,\beta}\big|
=\big|\,\|(I-\Pi_r)M_rg_\lambda\|^2-\|(I-\Pi)g_\lambda\|^2\big|
\le C_K\,\tau\,\|g_\lambda\|_{w,K}^2,
\label{eq:appE-bound}
\end{equation}
with \(C_K=2(1+\kappa_{\rm gap})+O(\tau)\). If the metrology floor is also profiled, the nuisance space is enlarged to \(\calN_F=\calN_r+\operatorname{span}\{\partial_{S_0}f\}\), so the residual norm can only decrease. The exact decrease is the squared projection of \((I-\Pi_r)g_\lambda^F\) onto the one residual floor direction \((I-\Pi_r)\partial_{S_0}f\), as written in the proof of \cref{prop:floor-coupling}. The subband statement is immediate: on a band where \(r(k)\le r_*\), every non-floor recipe tangent is multiplied by \(r\) and every raw Fisher integrand (a product of two such tangents) by \(r^2\le r_*^2\), so that subband contributes at most \(r_*^2\) of its floor-free value. The nuisance-block condition number tabulated in \cref{tab:coupling-diagnostics} and mapped in \cref{fig:coupling} is the numerical version of \(\kappa_{\rm gap}\); this is why large conditioning and untrustworthy floor-blind audits coincide.

\section{Release-map optimizers}
\label{app:optimizers}

\paragraph{Zero-leakage utility projection (\cref{thm:projection-synthesis}).} A deterministic projection release onto a \(d\)-dimensional subspace \(\calZ\) with isotropic noise has released Fisher \(\propto \Pi_\calZ G\); by the nullspace condition of \cref{thm:projection-synthesis} it has zero protected exponent iff \(\Pi_\calZ p=0\) for every protected \(p\), i.e.\ \(\calZ\subseteq\calP^\perp\). Subject to that, the retained utility energy is \(\sum_{a}\|\Pi_\calZ u_a\|^2=\Tr(\Pi_\calZ\Pi_\calU\Pi_\calZ)=\Tr(\Pi_\calZ\,\Pi_{\calP^\perp}\Pi_\calU\Pi_{\calP^\perp}\,\Pi_\calZ)\), the last equality because \(\Pi_\calZ=\Pi_{\calP^\perp}\Pi_\calZ\) on \(\calP^\perp\). Maximizing a trace \(\Tr(\Pi_\calZ A)\) over rank-\(d\) projections with \(A=\Pi_{\calP^\perp}\Pi_\calU\Pi_{\calP^\perp}\succeq0\) is Ky Fan's problem: the optimum is the span of the top \(d\) eigenvectors of \(A\), with value \(\sum_{i\le d}\sigma_i(A)\). This is exactly \texttt{release\_maps.zero\_leakage\_utility\_projection}, which forms \(A\) in the \(\sqrt{w}\)-weighted basis and returns its leading eigenspace; the unit test checks \(\Pi_\calP\calZ=0\) to machine precision.

\paragraph{Rank-one optimizer (\cref{thm:minimax}).} Only \(\operatorname{span}\{p,u\}\) matters, so write \(u=\rho p+\sqrt{1-\rho^2}\,q\) with \(q\perp p\), \(\|q\|=1\), and a candidate \(z=ap+bq\) with \(a^2+b^2\le1\). The leakage constraint is \(\inner{z}{p}^2/(2\tau^2)=a^2/(2\tau^2)\le\eps\), i.e.\ \(|a|\le A:=\sqrt{2\tau^2\eps}\). The utility is \(|\inner{z}{u}|=|a\rho+b\sqrt{1-\rho^2}|\). For fixed \(a\), \(b\) is maximized by saturating the norm, \(b=\sqrt{1-a^2}\), giving \(\phi(a)=a\rho+\sqrt{1-a^2}\sqrt{1-\rho^2}\) (signs aligned). Then \(\phi'(a)=\rho-\tfrac{a}{\sqrt{1-a^2}}\sqrt{1-\rho^2}\) vanishes at \(a^\star=\rho\) (where \(\phi=1\)), and \(\phi\) increases on \([0,\rho]\). Two regimes follow. If the budget is loose, \(A\ge|\rho|\), the unconstrained optimum \(a=\rho\) is feasible and the maximum utility is \(1\), attained at \(z=u\). If the budget binds, \(A<|\rho|\), then \(\phi\) is still increasing at \(a=A\), so the optimum saturates the constraint, \(|a|=A\), and
\begin{equation}
\max|\inner{z}{u}|=A\,|\rho|+\sqrt{1-A^2}\,\sqrt{1-\rho^2}
=\sqrt{2\tau^2\eps}\,|\rho|+\sqrt{1-2\tau^2\eps}\,\sqrt{1-\rho^2},
\label{eq:appF-util}
\end{equation}
which is \cref{thm:minimax} and the curve plotted in \cref{fig:release}. The scalar converse used for the dashed line inverts the same relation: a release achieving utility \(\gamma\) needs \(|a|\ge\) the smaller root of \(\phi(a)=\gamma\), i.e.\ protected exponent at least \(\tfrac{1}{2\tau^2}\big(\max\{0,\rho\gamma-\sqrt{(1-\rho^2)(1-\gamma^2)}\}\big)^2\), implemented as \texttt{release\_maps.leakage\_utility\_converse}.

\begin{thebibliography}{99}
\bibitem{ito1983} H. Ito and C. G. Willson, ``Chemical amplification in the design of dry developing resist materials,'' \emph{Polymer Eng.\ Sci.}, 23(18):1012--1018, 1983.
\bibitem{reichmanis1991} E. Reichmanis et al., ``Chemical amplification mechanisms for microlithography,'' \emph{Chem.\ Mater.}, 3(3):394--407, 1991.
\bibitem{wallraff1999} G. M. Wallraff and W. D. Hinsberg, ``Lithographic imaging techniques for the formation of nanoscopic features,'' \emph{Chem.\ Rev.}, 99(7):1801--1822, 1999.
\bibitem{houle2000} F. A. Houle et al., ``Determination of coupled acid catalysis-diffusion processes in a positive-tone chemically amplified photoresist,'' \emph{J.\ Vac.\ Sci.\ Technol.\ B}, 18(4):1874--1885, 2000.
\bibitem{houle2004} F. A. Houle et al., ``Acid-base reactions in a positive-tone chemically amplified photoresist,'' \emph{J.\ Vac.\ Sci.\ Technol.\ B}, 22(2):747--754, 2004.
\bibitem{vansteenwinckel2005} D. Van Steenwinckel et al., ``Lithographic importance of acid diffusion in chemically amplified resists,'' \emph{Proc.\ SPIE}, 5753:269--280, 2005.
\bibitem{fallica2016} R. Fallica et al., ``Dynamic absorption coefficients of CARs and non-CARs at EUV,'' \emph{J.\ Micro/Nanolith.\ MEMS MOEMS}, 15(3):033506, 2016.
\bibitem{welch1967} P. D. Welch, ``The use of fast Fourier transform for the estimation of power spectra,'' \emph{IEEE Trans.\ Audio Electroacoust.}, 15(2):70--73, 1967.
\bibitem{gallatin2005} G. M. Gallatin, ``Resist blur and line edge roughness,'' \emph{Proc.\ SPIE}, 5754:38--52, 2005.
\bibitem{gallatin2008} G. M. Gallatin, ``Resolution, LER, and sensitivity limitations of photoresist,'' \emph{Proc.\ SPIE}, 6921:69211E, 2008.
\bibitem{wallow2008} T. Wallow et al., ``Evaluation of EUV resist materials for use at the 32 nm half-pitch node,'' \emph{Proc.\ SPIE}, 6921:69211F, 2008.
\bibitem{naulleau2011} P. P. Naulleau et al., ``Critical challenges for EUV resist materials,'' \emph{Proc.\ SPIE}, 7972:797202, 2011.
\bibitem{naulleau2011metrology} P. P. Naulleau and B. McClinton, ``LER Metrology: Can we trust the numbers?,'' EUV Lithography Workshop P31, 2011.
\bibitem{brunner2017} T. A. Brunner et al., ``Line-edge roughness performance targets for EUV lithography,'' \emph{Proc.\ SPIE}, 10143:101430E, 2017.
\bibitem{deBisschop2017} P. De Bisschop, ``Stochastic effects in EUV lithography,'' \emph{J.\ Micro/Nanolith.\ MEMS MOEMS}, 16(4):041013, 2017.
\bibitem{mack2018shot} C. A. Mack, ``Shot noise: a 100-year history, with applications to lithography,'' \emph{J.\ Micro/Nanolith.\ MEMS MOEMS}, 17(4):041002, 2018.
\bibitem{mack2019metrics} C. A. Mack, ``Metrics for stochastic scaling in EUV lithography,'' \emph{Proc.\ SPIE}, 10957:109570F, 2019.
\bibitem{constantoudis2004} V. Constantoudis et al., ``Line edge roughness and critical dimension variation,'' \emph{J.\ Vac.\ Sci.\ Technol.\ B}, 22(4):1974--1981, 2004.
\bibitem{lorusso2018} G. F. Lorusso et al., ``The imec roughness protocol,'' \emph{Proc.\ SPIE}, 10585:105850D, 2018.
\bibitem{cutler2021} C. Cutler et al., ``Pattern roughness analysis using power spectral density,'' \emph{J.\ Micro/Nanopattern.\ Mater.\ Metrol.}, 20(1):010901, 2021.
\bibitem{palasantzas1993} G. Palasantzas, ``Roughness spectrum and surface width of self-affine fractal surfaces via the K-correlation model,'' \emph{Phys.\ Rev.\ B}, 48(19):14472--14478, 1993.
\bibitem{kocher1999} P. Kocher, J. Jaffe, and B. Jun, ``Differential power analysis,'' in \emph{CRYPTO '99}, LNCS 1666, pp.\ 388--397, Springer, 1999.
\bibitem{chari2003} S. Chari, J. R. Rao, and P. Rohatgi, ``Template attacks,'' in \emph{CHES 2002}, LNCS 2523, pp.\ 13--28, Springer, 2003.
\bibitem{goldwasser1984} S. Goldwasser and S. Micali, ``Probabilistic encryption,'' \emph{J.\ Comput.\ Syst.\ Sci.}, 28(2):270--299, 1984.
\bibitem{mack2017level} C. A. Mack, T. A. Brunner, X. Chen, and L. Sun, ``Level crossing methodology applied to line-edge roughness characterization,'' \emph{Proc.\ SPIE}, 10145:101450Z, 2017.
\bibitem{smith2009} G. Smith, ``On the foundations of quantitative information flow,'' in \emph{FoSSaCS}, LNCS 5504, pp.\ 288--302, Springer, 2009.
\bibitem{dwork2006} C. Dwork, ``Differential privacy,'' in \emph{ICALP}, LNCS 4052, pp.\ 1--12, Springer, 2006.
\bibitem{cover2006} T. M. Cover and J. A. Thomas, \emph{Elements of Information Theory}, 2nd ed., Wiley, 2006.
\bibitem{standaert2009} F.-X. Standaert, T. G. Malkin, and M. Yung, ``A unified framework for the analysis of side-channel key recovery attacks,'' in \emph{EUROCRYPT 2009}, LNCS 5479, pp.\ 443--461, Springer, 2009.
\bibitem{choudary2013} O. Choudary and M. G. Kuhn, ``Efficient template attacks,'' in \emph{CARDIS 2013}, LNCS 8419, pp.\ 253--270, Springer, 2014.
\bibitem{ledoit2004} O. Ledoit and M. Wolf, ``A well-conditioned estimator for large-dimensional covariance matrices,'' \emph{J.\ Multivariate Anal.}, 88(2):365--411, 2004.
\bibitem{alvim2020} M. S. Alvim, K. Chatzikokolakis, A. McIver, C. Morgan, C. Palamidessi, and G. Smith, \emph{The Science of Quantitative Information Flow}, Springer, 2020.
\bibitem{shokri2017} R. Shokri, M. Stronati, C. Song, and V. Shmatikov, ``Membership inference attacks against machine learning models,'' in \emph{IEEE Symp.\ Security and Privacy}, pp.\ 3--18, 2017.
\bibitem{dinur2003} I. Dinur and K. Nissim, ``Revealing information while preserving privacy,'' in \emph{PODS}, pp.\ 202--210, ACM, 2003.
\end{thebibliography}
\end{document}